\newcommand{\ket}[1]{\ensuremath{|#1\rangle}}
\newcommand{\bra}[1]{\ensuremath{\langle #1|}}
\newcommand{\braket}[2]{\ensuremath{\langle #1|#2\rangle}}
\newcommand{\ketbra}[2]{\ensuremath{|#1\rangle\!\langle #2|}}
\newcommand{\norm}[1]{\left\lVert #1 \right\rVert}
\DeclareMathOperator{\tr}{\text{tr}}
\DeclareMathOperator{\var}{Var}
\DeclareMathOperator{\cov}{Cov}
\DeclareMathOperator{\dir}{Dir}
\newenvironment{theorem*}[1]%
  {\par\noindent\textbf{Theorem #1. }\itshape}%
  {\par\vspace{1ex}}
\newenvironment{lemma*}[1]%
  {\par\noindent\textbf{Lemma #1. }\itshape}%
  {\par\vspace{1ex}}
\newenvironment{corollary*}[1]%
  {\par\noindent\textbf{Corollary #1. }\itshape}%
  {\par\vspace{1ex}}
\begin{document}

\title{Trainability of Parametrised Linear Combinations of Unitaries}
\author{Nikhil Khatri\inst{1,2}, Stefan Zohren\inst{1}, Gabriel Matos\inst{2}}
\institute{Machine Learning Research Group, Department of Engineering Science, University of Oxford \\
Eagle House, Walton Well Road, Oxford, United Kingdom
\and
Quantinuum, Partnership House, Carlisle Place, London SW1P 1BX, United Kingdom 
}

\maketitle
\begin{abstract}
A principal concern in the optimisation of parametrised quantum circuits is the presence of barren plateaus, which present fundamental challenges to the scalability of applications, such as variational algorithms and quantum machine learning models. Recent proposals for these methods have increasingly used the linear combination of unitaries (LCU) procedure as a core component. In this work, we prove that an LCU of trainable parametrised circuits is still trainable. We do so by analytically deriving the expression for the variance of the expectation when applying the LCU to a set of parametrised circuits, taking into account the postselection probability. These results extend to incoherent superpositions.
We support our conclusions with numerical results on linear combinations of fermionic Gaussian unitaries (matchgate circuits). Our work shows that sums of trainable parametrised circuits are still trainable, and thus provides a method to construct new families of more expressive trainable circuits. 
We argue that there is a scope for a quantum speed-up when evaluating these trainable circuits on a quantum device.

\end{abstract}

\section{Introduction}
\label{sec:introduction}
Most applications relying on the optimisation of parametrised quantum circuits, such as quantum machine learning and variational algorithms, are known to suffer from the barren plateau phenomenon~\citep{mcclean2018barren, Larocca2025}: an exponential decrease in the variance of the expectation value with an increase in the number of qubits. This poses a fundamental challenge to the scalability of these methods. Thus, it is of great value to find parametrised circuits that are \emph{trainable} at scale, i.e., have a polynomially decreasing variance of the expectation with the number of qubits.

In this work, we study the linear combinations of unitaries (LCU) construction in a variational context. Originally proposed for Hamiltonian simulation \citep{childs2012}, these circuits have recently found use in quantum machine learning, where the unitaries or the coefficients in the superposition are parametrised and updated as part of an optimisation routine \citep{coopmans2024sample,khatri2024quixer,heredge_nonunitary_qml}. Despite these recent applications, there has been no rigorous characterisation of the variance of expectation values in variational LCUs, leaving open the question of the conditions under which these are trainable. Here, we fill this gap by deriving an analytical expression for the variance of the expectation value of these circuits. To this end, we model the distribution of LCU coefficients using the uniform Dirichlet distribution, the maximum entropy distribution over the space of $l_1$-normalised vectors. We also derive a general bound independent of the distribution of the LCU coefficients. These results take into account the postselction probability of the LCU.

Our work shows that linear combinations of trainable parametrised circuits are still trainable, and precisely quantifies the degree to which each component of the LCU contributes to the variance of the expectation values. To demonstrate this, we specialise our analytical results to the case of fermionic Gaussian unitaries (also known as matchgate circuits), and numerically compute the variance of their LCU, observing a close match with our analytical expression. This case is particularly relevant, as matchgates are known to be trainable~\citep{diaz2023}, while linear combinations of matchgates become more and more expressive as the number of terms increases. This yields a family of trainable circuits with increasing expressivity, which is controlled by the number of terms of the LCU. We remark that the gate cost of running these matchgate LCU circuits on a quantum computer is polynomially lower than the number of operations necessary to simulate them classically, and thus there exists the scope for a quantum speed-up in trainable parametrised quantum circuits. We note, however, that for current hardware, the time it takes to execute a classical operation is much lower than the time it takes to run a quantum gate, making this improvement currently impractical.  \\

\noindent \emph{Outline:} In \cref{sec:analytical}, we present our main results, consisting of an  analytical expression for the variance of an expectation value of an observable with respect to the output state of an LCU and a more general bound on this variance. We then consider the special case where the unitaries in the linear combination are fermionic Gaussian unitaries in \cref{sec:ff_lcu}, and compare this with numerical results in \cref{sec:numerical}. Finally, we extend our results to incoherent superpositions in \cref{sec:incoherent_superpositions}, before concluding in \cref{sec:conclusion}.

\section{Main Results}
\label{sec:analytical}
In this section, we summarise our main results, which consist of:
\begin{itemize}
    \item[$\circ$] A lower bound on the variance of the expectation, independent of the distribution of the LCU coefficients, showing that an LCU of trainable parametrised circuits is still trainable (\cref{thm:general_bound}); 
    \item[$\circ$] The exact expression for this variance when the LCU coefficients follow a uniform distribution over the space of $l_1$-normalised vectors (\cref{thm:var_dirichhaar}), giving the individual contributions of each LCU term to the variance;
    \item[$\circ$] Versions of these results that apply to incoherent superpositions of mixed states (see \cref{sec:incoherent_superpositions}).
\end{itemize}
We now lay out the setup necessary to formally state these results, with proofs given in \cref{sec:proofs}. 

Given a set of parametrised quantum circuits $U_1(\vec{\theta^{(1)}}), ..., U_k(\vec{\theta^{(k)}})$ on $N$ qubits and parameters $\vec{c}$ such that $\norm{c}_1 = 1$, the LCU construction~\citep{childs2012} yields the state 
\begin{align}
\label{eq:lcu_output}
\ket{\psi} = \sum_{j=1}^k c_j\cdot U_j(\vec{\theta^{(j)}}) \ket{0}.  
\end{align}
Note that the $l_1$-normalisation is a requirement of the LCU. In what follows, we will assume each unitary to be sampled from the Haar distribution over the \emph{Lie group} it generates; see \cref{sec:lie_groups} for a brief review of these structures. This is justified by sufficiently deep parametrised circuits approximating the Haar distribution over these groups~\citep{DAlessandro2021}, and follows previous approaches~\citep{Ragone2024, Fontana2024}. These Haar distributions are independent, since the parameters $\vec{\theta^{(j)}}$ are also taken to be independent.

Thus, we consider states of the form 
\begin{align}
    \ket{\psi} = \sum_{j=1}^k c_j\cdot U_j \ket{0}
\end{align}
where $U_j$ is sampled from the Haar distribution over the corresponding Lie group $\mathcal{G}_j$, and the coefficients are sampled from a distribution on the space of $l_1$-normalised vectors. Note that a linear combination of unitary matrices is not in general unitary, and the state $\ket{\psi}$ may be sub-normalized. This sub-normalisation corresponds to the postselection probability in the LCU, which is taken into account in our approach. This is because we do not normalise the resulting state \eqref{eq:lcu_output} in our derivations, so that this sub-normalisation is directly reflected in the expectation \eqref{eq:cost}.

The quantities of interest in applications typically take the form of an expectation value $m$ of some Hermitian observable $O$,
\begin{align}
\label{eq:cost}
    m := \tr(\rho O),
\end{align}
where $\rho = \ketbra{\psi}{\psi}$ for pure states. By linearity, $m$ decomposes as a sum 
\begin{align}
    m &= \sum_{i,j=1}^k c_i c_j m_{ij}\\
    m_{ij}&:=~\tr( U_i\rho_0 U_j^{\dagger} O)
\end{align}
including $k$ terms of the form 
\begin{align}
m_i := m_{ii}.
\end{align}
We begin by deriving a bound on the variance of \eqref{eq:cost} making no assumptions on the distribution of the LCU coefficients.
\begin{theorem}
\label{thm:general_bound}
The variance of the expectation value of an observable $O$ for an LCU of Haar random unitaries, for which $E[m_i] = 0$, is lower bounded by
\begin{align}
    \var[m] &\geq \sum_{j=1}^k E[c_j^2]^2 \var[m_j] \geq \frac{1}{k^3}\min_j \var[m_j].
\end{align}
\end{theorem}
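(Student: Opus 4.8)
The plan is to prove the two inequalities separately, starting from the law of total variance to exploit the independence of the coefficients $\vec{c}$ from the Haar-random unitaries $U_j$.

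\paragraph{First inequality.}
First I would expand the expectation $m = \sum_{i,j} c_i c_j m_{ij}$ and condition on the coefficients $\vec{c}$, using the tower property $\var[m] = E\!\left[\var[m \mid \vec{c}]\right] + \var\!\left[E[m \mid \vec{c}]\right]$, and drop the second (nonnegative) term to obtain $\var[m] \geq E\!\left[\var[m \mid \vec{c}]\right]$. Conditioned on $\vec{c}$, the quantity $m$ is a linear combination of the $m_{ij}$ with fixed weights, so $\var[m \mid \vec{c}]$ is a sum over pairs of covariance terms $c_i c_j c_{i'} c_{j'} \cov(m_{ij}, m_{i'j'})$. The key structural fact I would invoke is that the unitaries $U_j$ are sampled from \emph{independent} Haar distributions over their respective Lie groups. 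This independence, combined with the hypothesis $E[m_i]=0$ and the unitary-invariance of the Haar measure (which kills the mean of the off-diagonal $m_{ij}$ terms for $i\neq j$), should force the cross-covariances between distinct diagonal terms $m_j$ to vanish and make the off-diagonal contributions nonnegative or zero in expectation, leaving at least the diagonal sum $\sum_j c_j^2 \var[m_j \mid \vec{c}]$; since $U_j$ is independent of $\vec{c}$, $\var[m_j \mid \vec{c}] = \var[m_j]$. Taking the expectation over $\vec{c}$ then yields $\var[m] \geq \sum_j E[c_j^2]^2 \var[m_j]$, where the square on $E[c_j^2]$ arises because the diagonal term contributes with weight $c_j^2$ and its own variance is over the randomness of $U_j$. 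I would need to track carefully why it is $E[c_j^2]^2$ rather than $E[c_j^4]$ — this comes from separating the $\vec{c}$-expectation of $c_j^2$ acting as a fixed scaling on each independent diagonal variance, and is the step where the independence of coefficients from unitaries is essential.

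\paragraph{Second inequality.}
For the second bound I would lower-bound $\sum_j E[c_j^2]^2 \var[m_j]$ by pulling out $\min_j \var[m_j]$ to get $\min_j \var[m_j] \cdot \sum_j E[c_j^2]^2$, and then show $\sum_j E[c_j^2]^2 \geq 1/k^3$. Since $\norm{c}_1 = 1$ and there are $k$ terms, the power-mean or Cauchy–Schwarz inequality gives $\sum_j c_j^2 \geq (\sum_j |c_j|)^2 / k = 1/k$, and a further application of Cauchy–Schwarz over the $k$ indices gives $\sum_j E[c_j^2]^2 \geq (\sum_j E[c_j^2])^2 / k = (E[\sum_j c_j^2])^2 / k \geq (1/k)^2 / k = 1/k^3$. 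Chaining these yields the stated bound.

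\paragraph{Main obstacle.}
I expect the main difficulty to lie in the first inequality, specifically in rigorously discarding the off-diagonal and cross terms. One must argue that the off-diagonal terms $m_{ij}$ with $i \neq j$ either have zero mean (so they do not correlate with the diagonal terms after taking expectations) or contribute nonnegatively to the variance. Establishing this cleanly requires careful use of the independence of the distinct Haar measures together with the $E[m_i]=0$ hypothesis, and verifying that no negative cross-covariance can reduce the variance below the diagonal sum — this is the delicate step that makes the lower bound valid rather than merely heuristic.
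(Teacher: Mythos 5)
Your route is genuinely different from the paper's, and once one step is repaired it works --- in fact it is slightly stronger. The paper never conditions on $\vec{c}$: it expands $E[m^2]$ and $E[m]^2$ directly into fourth moments of the coefficients (using \cref{lem:emij_zero}, $E[m_{ij}^n]=0$ for $i\neq j$, proved by phase invariance of the Haar measure), drops the nonnegative $E[|m_{ij}|^2]$ terms, applies Jensen's inequality $E[c_i^4]\geq E[c_i^2]^2$, and then needs the hypothesis $E[m_i]=0$ precisely to kill the leftover term $\sum_{i\neq j}\cov(c_i^2,c_j^2)\,E[m_i]E[m_j]$, which is not sign-definite. Your law-of-total-variance decomposition handles that term for free: conditioned on $\vec{c}$, the diagonal variables $m_i$ are independent, so all $E[m_i]E[m_j]$ cross terms cancel \emph{inside} $\var[m\mid\vec{c}]$, and what survives of them sits in $\var\!\left[E[m\mid\vec{c}]\right]=\var_{\vec{c}}\!\left[\sum_i c_i^2 E[m_i]\right]$, which is nonnegative simply because it is a variance. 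Concretely,
\begin{align}
\var[m\mid\vec{c}] = \sum_{i=1}^k c_i^4\,\var[m_i] + \sum_{i\neq j}^k c_i^2c_j^2\,E[|m_{ij}|^2],
\end{align}
where every cross covariance other than $\cov(m_{ij},m_{ji})=E[|m_{ij}|^2]\geq 0$ vanishes, either by independence of the $U_j$ or by the same phase-invariance argument as \cref{lem:emij_zero} (e.g.\ $E[m_i\, m_{ij}]=0$ follows by applying the phase trick to $U_j$ alone). This is exactly the ``delicate step'' you flagged, and it is resolved by the paper's lemma plus independence; note it even renders the hypothesis $E[m_i]=0$ unnecessary for the bound in this route.

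The genuine gap is your diagonal coefficient. In $m=\sum_{i,j}c_ic_jm_{ij}$ the term $m_j=m_{jj}$ enters with weight $c_j^2$, so its variance contributes to $\var[m\mid\vec{c}]$ with weight $c_j^4$, not $c_j^2$; averaging over $\vec{c}$ therefore yields $\sum_j E[c_j^4]\var[m_j]$. Your proposed explanation for landing on $E[c_j^2]^2$ --- ``separating the $\vec{c}$-expectation of $c_j^2$ as a fixed scaling'' --- is not a valid move: $E[c_j^4]\neq E[c_j^2]^2$ for any non-degenerate distribution, and no such factorisation exists. The correct (and necessary) step is Jensen's inequality, $E[c_j^4]\geq E[c_j^2]^2$, which fortunately points in the right direction and is exactly the step the paper uses. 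Your second inequality is sound and is an acceptable alternative to the paper's \cref{lem:holder}: Cauchy--Schwarz gives $\sum_j E[c_j^2]^2 \geq \frac{1}{k}\left(\sum_j E[c_j^2]\right)^2$, and $\sum_j E[c_j^2]=E\!\left[\sum_j c_j^2\right]\geq \frac{1}{k}$ from $\norm{c}_1=1$, hence $\sum_j E[c_j^2]^2\geq 1/k^3$; the paper instead goes through $E[c_j^2]^2\geq E[c_j]^4$ and a H\"older bound on first moments, with the same conclusion.
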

\begin{proof}
    \hyperref[proof:general_bound]{Proof in \cref*{sec:lcu_haar}.}
\end{proof}
The constraint in \cref{thm:general_bound} encompasses most relevant observables. For $\mathcal{G}_j = SU(2^N)$, any traceless observable (equivalently, any linear combination of Pauli strings not including the identity) satisfies $E[m_j] = 0$ \citep{mele2024introduction}. For $\mathcal{G}_j = SO(2N)$, any linear combination of Pauli strings not supported on all qubits and not including the identity satisfies $E[m_i] = 0$~\citep{diaz2023}. This is a general bound, with no restrictions on the distribution on the $l_1$-normalised LCU weights, and represents a \emph{worst-case} setting.

In what follows, we assume $\vec{c}$ to follow a uniform Dirichlet distribution. This is the maximum entropy distribution over the space of $l_1$-normalised vectors, and corresponds to the measure being distributed as uniformly as possible over the underlying space; this is analogous to the Haar measure being the maximum entropy distribution over the corresponding Lie group.

\begin{theorem}\label{thm:var_dirichhaar}
For state a $\rho$ prepared by a uniform Dirichlet linear combination of Haar random unitaries $\sum_{j=1}^k c_j\cdot U_j \ket{0}$, the variance of the expectation value $\tr(\rho O)$ of an observable $O$ is given by
\begin{align}
\label{eq:dirichlet_lcu_var_general}
\var[m] = \sum_{i=1}^k \frac{24 \cdot E[m_i^2]}{k \cdot (k+1) \cdot (k+2) \cdot (k+3)} + \sum_{i\neq j}^k \frac{4 \cdot (E[m_i]E[m_j] + E[|m_{ij}|^2])}{k \cdot (k+1) \cdot (k+2) \cdot (k+3)} - \sum_{i,j=1}^k \frac{4 \cdot E[m_i]E[m_j]}{k^2 (k+1)^2}.
\end{align}
Specialising to the case of where all Lie groups are the same yields
\begin{align}
\label{eq:dirichlet_lcu_var}
\var[m] = \frac{24 \cdot E[m_i^2]}{(k+1) \cdot (k+2) \cdot (k+3)} + \frac{4 \cdot (k-1) \cdot (E[m_i]^2 + E[|m_{ij}|^2])}{(k+1) \cdot (k+2) \cdot (k+3)} - \frac{4 \cdot E[m_i]^2}{(k+1)^2}.
\end{align}
\end{theorem}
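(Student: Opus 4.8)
The plan is to compute $E[m^2]$ and $E[m]^2$ separately and subtract, exploiting throughout the independence of the coefficient vector $\vec{c}$ from the unitaries $\{U_j\}$, together with the mutual independence of the $U_j$. Writing $m = \sum_{a,b} c_a c_b\, m_{ab}$ with $m_{ab} = \tr(U_a \rho_0 U_b^{\dagger} O)$ and $\rho_0 = \ketbra{0}{0}$, independence lets me factor every moment of $m$ into a product of a Dirichlet moment of the $c$'s and a Haar moment of the $U$'s. For the uniform Dirichlet distribution $\dir(1,\dots,1)$ the required mixed moments follow from the standard formula $E\!\left[\prod_i c_i^{n_i}\right] = \frac{\Gamma(k)}{\Gamma(k + \sum_i n_i)}\prod_i \Gamma(1+n_i)$, which gives $E[c_i^2] = 2/[k(k+1)]$, $E[c_i^4] = 24/[k(k+1)(k+2)(k+3)]$, and $E[c_i^2 c_j^2] = 4/[k(k+1)(k+2)(k+3)]$ for $i \neq j$.

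The crux of the argument is the Haar average, and in particular determining which index patterns survive it. For $E[m]$, the off-diagonal terms factor as $E[m_{ab}] = \tr(E[U_a]\rho_0 E[U_b]^{\dagger} O)$ for $a \neq b$, which vanishes because the first Haar moment over each group vanishes; only the diagonal survives, giving $E[m] = E[c_i^2]\sum_i E[m_i]$ and hence $E[m]^2 = \frac{4}{k^2(k+1)^2}\sum_{i,j} E[m_i]E[m_j]$, reproducing the subtracted term. For $E[m^2] = \sum_{a,b,c,d} E[c_a c_b c_c c_d]\,E_U[m_{ab} m_{cd}]$ I perform a case analysis on the coincidences among $(a,b,c,d)$. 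Since the groups are independent, the Haar average factorises over the distinct indices, and each factor vanishes unless that index appears equally often through $U$ and through $U^{\dagger}$; this selection rule, together with $E[U_j]=0$, holds for the groups of interest. The $U$-slots carry indices $\{a,c\}$ and the $U^{\dagger}$-slots carry $\{b,d\}$, so the surviving patterns are exactly those with $\{a,c\}=\{b,d\}$ as multisets: the fully diagonal pattern $a=b=c=d=i$ contributing $E[m_i^2]$; the pattern $(i,i,j,j)$ with $i\neq j$ contributing $E[m_i]E[m_j]$ (the two unitaries are independent, so this factorises); and the crossed pattern $(i,j,j,i)$ with $i\neq j$ contributing $E[m_{ij}m_{ji}] = E[|m_{ij}|^2]$, which does \emph{not} factorise since both $U_i$ and $U_j$ appear in each factor, using $m_{ji} = \overline{m_{ij}}$.

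Combining these with the Dirichlet moments, and noting that both off-diagonal patterns carry the same coefficient moment $E[c_i^2 c_j^2]$, gives
\[
E[m^2] = \frac{24\sum_i E[m_i^2]}{k(k+1)(k+2)(k+3)} + \frac{4\sum_{i\neq j}\bigl(E[m_i]E[m_j] + E[|m_{ij}|^2]\bigr)}{k(k+1)(k+2)(k+3)}.
\]
Subtracting $E[m]^2$ then yields \cref{eq:dirichlet_lcu_var_general}. The specialisation \cref{eq:dirichlet_lcu_var} follows immediately by taking $E[m_i]$, $E[m_i^2]$, and $E[|m_{ij}|^2]$ to be independent of the indices and using $\sum_i 1 = k$, $\sum_{i\neq j} 1 = k(k-1)$, and $\sum_{i,j} 1 = k^2$.

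I expect the main obstacle to be the Haar case analysis: correctly enumerating the index coincidences, justifying the vanishing of the unbalanced patterns (in particular $(i,j,i,j)$, which would otherwise produce a spurious $E[m_{ij}^2]$ term but is killed by the selection rule, since it requires an unbalanced second moment), and being careful that the crossed pattern does not factorise while the product pattern does. The bookkeeping of which Dirichlet moment attaches to each surviving Haar pattern is the other place where an error would propagate into a wrong prefactor.
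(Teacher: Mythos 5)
Your proof is correct and takes essentially the same route as the paper: expand $m=\sum_{a,b}c_a c_b m_{ab}$, factor Dirichlet moments from Haar moments by independence, eliminate the unbalanced index patterns (including the $(i,j,i,j)$ pattern that would give a spurious $E[m_{ij}^2]$), and substitute the uniform Dirichlet moments $E[c_i^2]$, $E[c_i^4]$, $E[c_i^2c_j^2]$, which you compute identically. The only cosmetic difference is that the paper packages your selection rule as a single phase-invariance lemma ($E[m_{ij}^n]=0$ for $i\neq j$, $n\geq 1$, proved via $U_i \mapsto e^{i\pi/n}U_i$), whereas you justify the vanishing through factorisation over independent groups together with $E[U_j]=0$; the surviving patterns, coefficients, and final bookkeeping coincide with the paper's.
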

\begin{proof}
    \hyperref[proof:var_dirichhaar]{Proof in \cref*{sec:lcu_dirichlet_haar}.}
\end{proof}
The terms $E[m_{i}^2], E[m_{i}]^2, E[|m_{ij}|^2]$ can be computed using the Weingarten calculus and the Haar measure on the Lie groups $\mathcal{G}_i, \mathcal{G}_j$, yielding analytical expressions for these, and therefore also for the variance of the expectation value. 

\section{Examples}

\subsection{Linear combinations of fermionic Gaussian unitaries / matchgate circuits}
\label{sec:ff_lcu}

 A particularly relevant case is that of linear combinations of fermionic Gaussian unitaries (also called matchgate circuits). This is because matchgate circuits are known to be trainable~\citep{diaz2023}, while linear combinations of matchgates become more and more expressive as the number of terms increases. Thus, applying our results to matchgates yields a family of trainable circuits with expressivity controlled by the number of terms of the LCU. 

Fermionic Gaussian unitaries correspond to the Lie group $\mathcal{G} = SO(2N)$. The number of terms $k$ in the LCU upper bounds the \emph{fermionic Gaussian rank} of the resulting state, which is the minimum number of fermionic Gaussian states that a state can be expressed as the linear combination of. We begin by deriving the value of $E[|m_{ij}|^2]$ for this Lie group.

\begin{lemma}
\label{lem:ff_abssq}
If the Lie group is the space of all fermionic Gaussian unitaries on $N$ qubits i.e.\ $\mathcal{G} = SO(2N)$, and $O$ is an observable, then
\begin{align}
    E[|m_{ij}|^2] &= \frac{1}{2^{2N}} \tr(\rho_0^2)\tr(O^2) + \frac{2}{2^{2N}} \tr(\rho_0^2P)\tr(O^2P) + \frac{1}{2^{2N}} \tr(\rho_0P\rho_0P)\tr(OPOP)
\end{align}
where $P$ is the fermionic parity operator.
\end{lemma}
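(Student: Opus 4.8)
The plan is to reduce $E[|m_{ij}|^2]$ to two successive single-copy Haar averages over the fermionic Gaussian group, exploiting the fact that such unitaries preserve fermionic parity. First, since $O$ and $\rho_0 = \ketbra{0}{0}$ are Hermitian, one has $\overline{m_{ij}} = \tr(O U_j \rho_0 U_i^\dagger) = m_{ji}$, so that $|m_{ij}|^2 = m_{ij}\,m_{ji}$. Because $U_i$ and $U_j$ are independent Haar elements of the $SO(2N)$ representation, I can then integrate them one at a time.

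The key ingredient, which I would isolate as a preliminary claim, is the first-order Gaussian twirl
\[
\Phi(X) := E_U[U X U^\dagger] = \frac{1}{2^N}\bigl(\tr(X)\,I + \tr(PX)\,P\bigr).
\]
This holds because every fermionic Gaussian unitary commutes with $P$ and is block-diagonal with respect to the decomposition $\mathcal{H} = \mathcal{H}_+ \oplus \mathcal{H}_-$ into even and odd parity sectors, acting irreducibly on each of the two $2^{N-1}$-dimensional sectors (the half-spinor representations). By Schur's lemma the commutant of the representation is therefore $\mathrm{span}\{\Pi_+, \Pi_-\} = \mathrm{span}\{I, P\}$ with $\Pi_\pm = \tfrac12(I\pm P)$; Haar averaging is exactly the orthogonal projection onto this commutant, and the two coefficients follow by pairing against $I$ and $P$. (Equivalently, expanding $X$ in the Majorana basis reduces this to the second moment $E[R_{\mu\nu}R_{\alpha\beta}] = \tfrac{1}{2N}\delta_{\mu\alpha}\delta_{\nu\beta}$ of $SO(2N)$.) I expect establishing this formula — in particular the irreducibility on each parity sector — to be the main conceptual step; everything afterwards is contraction bookkeeping.

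Next I would integrate $U_i$. Since the two copies of $U_i$ sit in different traces (one as $U_i$, one as $U_i^\dagger$), this is not a pure conjugation, so I would first read off the two-index moment $E[(U_i)_{ab}\overline{(U_i)_{cd}}] = \tfrac{1}{2^N}(\delta_{ac}\delta_{bd} + P_{ac}P_{db})$ from $\Phi$ and contract, obtaining
\[
E_{U_i}[m_{ij}\,m_{ji}] = \tfrac{1}{2^N}\bigl(\tr(A_1 A_2) + \tr(A_1 P A_2 P)\bigr), \qquad A_1 = \rho_0 U_j^\dagger O,\ \ A_2 = O U_j \rho_0.
\]
Now $U_j$ enters purely by conjugation, so applying $\Phi$ once more sends $\tr(A_1 A_2) = \tr(O^2 U_j \rho_0^2 U_j^\dagger) \mapsto \tr(\Phi(\rho_0^2)\,O^2)$ and $\tr(A_1 P A_2 P) \mapsto \tr(\Phi(\rho_0 P \rho_0)\,OPO)$. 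Expanding each $\Phi(\cdot)$ via the twirl formula, the outer factor $1/2^N$ combines with the inner one to give the overall $1/2^{2N}$, and using $\tr(OPO) = \tr(O^2 P)$ and $\tr(POPO) = \tr(OPOP)$ collects exactly the three claimed terms; the coefficients $\tr(\rho_0^2)$, $\tr(\rho_0^2 P)$, $\tr(\rho_0 P \rho_0 P)$ emerge directly from the $\Phi(\cdot)$ evaluations, and each equals $1$ for $\rho_0 = \ketbra{0}{0}$. The only points demanding care are the correct index contraction in the non-conjugation $U_i$ average and keeping the powers of $2^N$ consistent across the two integrations.
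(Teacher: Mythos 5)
Your proposal is correct and rests on exactly the same key fact as the paper's proof: the first-order Haar moment of the fermionic Gaussian group is the orthogonal projection onto its commutant $\mathrm{span}\{I,P\}$, which the paper invokes in vectorized form as $E[U\otimes U^*]=\ket{I}\rangle\langle\bra{I}+\ket{P}\rangle\langle\bra{P}$ (with normalisation $1/2^N$) and you invoke as the single-copy twirl $\Phi(X)=\tfrac{1}{2^N}(\tr(X)I+\tr(PX)P)$ together with its entry-wise moment form. The only differences are organisational — the paper folds both traces into one two-copy trace and averages $U_i$ and $U_j$ jointly, whereas you integrate them sequentially via index contraction, and you additionally justify the commutant structure by Schur's lemma on the two half-spinor sectors, which the paper instead cites from the literature — and both computations produce the identical three-term expansion.
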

\begin{proof}
    \hyperref[proof:ff_abssq]{Proof in \cref*{sec:app_lcu_ff}.}
\end{proof}
We leverage the work of \citet{diaz2023} in \cref{sec:rank1exps} to obtain $E[m_{i}^2], E[m_{i}]^2$. In the following result, we consider observables that are quadratic Hamiltonians, which describe free fermion systems (see \cref{sec:free_fermions}). We give a more general version of \cref{cor:ffvar} lifting the constraints on the observable and the initial state as \cref{cor:general_ffvar_app} of \cref{sec:app_lcu_ff}.
\begin{corollary}\label{cor:ffvar}
The expectation value of a quadratic observable $O$ for an LCU of Haar random fermionic Gaussian unitaries with an even, pure fermionic Gaussian initial state, has variance 
\begin{align}
\label{eq:ffvar}
    \var[m] = \frac{\tr(O^2)}{2^N} \left [ \frac{24}{(k+1) \cdot (k+2) \cdot (k+3) \cdot (2N-1)} + \frac{4 \cdot (k-1)}{(k+1) \cdot (k+2) \cdot (k+3) \cdot 2^{N-1} } \right ].
\end{align}
\end{corollary}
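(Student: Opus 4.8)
The plan is to specialise the same-Lie-group variance formula \eqref{eq:dirichlet_lcu_var} of \cref{thm:var_dirichhaar} to $\mathcal{G} = SO(2N)$, which only requires the three single-group moments $E[m_i]$, $E[m_i^2]$ and $E[|m_{ij}|^2]$ for a quadratic observable $O$ and an even, pure fermionic Gaussian initial state $\rho_0$. For the first two, I would invoke the Weingarten computation over $SO(2N)$ carried out (following \citet{diaz2023}) in \cref{sec:rank1exps}, which gives $E[m_i] = 0$ and $E[m_i^2] = \tr(O^2)/(2^N(2N-1))$. The vanishing first moment kills both the $E[m_i]^2$ contribution inside the middle summand and the entire final summand of \eqref{eq:dirichlet_lcu_var}, leaving
\begin{align}
\var[m] = \frac{24\,E[m_i^2]}{(k+1)(k+2)(k+3)} + \frac{4(k-1)\,E[|m_{ij}|^2]}{(k+1)(k+2)(k+3)}.
\end{align}

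The substantive step is evaluating $E[|m_{ij}|^2]$ from \cref{lem:ff_abssq}, which writes it as a sum of three products of parity-weighted traces. Here I would exploit the two structural assumptions. Because $\rho_0$ is pure and even, $\rho_0^2 = \rho_0$ and $P\rho_0 = \rho_0 P = \rho_0$, so $P\rho_0 P = \rho_0$ and all three initial-state factors collapse to $\tr(\rho_0^2) = \tr(\rho_0^2 P) = \tr(\rho_0 P\rho_0 P) = 1$. Because $O$ is quadratic it is built from weight-two Majorana monomials and hence commutes with $P$, so $\tr(OPOP) = \tr(O^2 P^2) = \tr(O^2)$; moreover $O^2$ expands into Majorana monomials of weight $0$ and $4$ only, none proportional to the full-weight product $P$ for $N \geq 3$, whence $\tr(O^2 P) = 0$. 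Substituting these into \cref{lem:ff_abssq} collapses the three terms to $E[|m_{ij}|^2] = 2\tr(O^2)/2^{2N} = \tr(O^2)/2^{2N-1}$.

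Finally I would substitute $E[m_i^2] = \tr(O^2)/(2^N(2N-1))$ and $E[|m_{ij}|^2] = \tr(O^2)/2^{2N-1}$ into the reduced variance and factor out $\tr(O^2)/2^N$; the first summand then contributes the $24/[(k+1)(k+2)(k+3)(2N-1)]$ term and the second the $4(k-1)/[(k+1)(k+2)(k+3)2^{N-1}]$ term, reproducing \eqref{eq:ffvar}. I expect the only real obstacle to be the parity-trace bookkeeping of the middle paragraph — specifically justifying $\tr(O^2 P) = 0$ through a Majorana-weight argument on $O^2$ and confirming $P\rho_0 P = \rho_0$ for the even pure Gaussian state — since once those traces are pinned down the remainder is pure substitution and algebraic collection of the prefactors.
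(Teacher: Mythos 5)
Your proposal is correct and takes essentially the same route as the paper's own proof: specialise \eqref{eq:dirichlet_lcu_var} using $E[m_i]=0$ and $E[m_i^2]=\tr(O^2)/(2^N(2N-1))$ from \cref{sec:rank1exps}, evaluate $E[|m_{ij}|^2]=\tr(O^2)/2^{2N-1}$ via \cref{lem:ff_abssq} using purity, evenness, and $[O,P]=0$, then substitute and factor out $\tr(O^2)/2^N$. If anything, your explicit Majorana-weight argument for $\tr(O^2P)=0$ (valid for $N\geq 3$) justifies a cancellation that the paper's proof performs silently when it drops the middle term of \cref{lem:ff_abssq}.
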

\begin{proof}
    \hyperref[proof:ffvar]{Proof in \cref*{sec:app_lcu_ff}.}
\end{proof}
Note that $\tr(O^2)/2^N = 1$ for most relevant observables. For instance, if $S$ is a Pauli string, $\tr(S^2)/2^N = \tr(I)/2^N = 2^N/2^N = 1$, where we have used the fact that Pauli strings are self-inverse. 
\begin{corollary}\label{cor:bigoZ1}
The variance of the expectation value of an observable for an LCU of Haar random fermionic Gaussian unitaries scales as
\begin{align}
    \var[m] \in \Omega \left (\frac{1}{N^s \cdot k^3} \right ).
\end{align}
where $m$ depends on the observable. For quadratic observables, $s=1$.
\end{corollary}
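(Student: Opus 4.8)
The plan is to prove the two claims separately, since they rest on different inputs: the sharp $s=1$ statement for quadratic observables follows from the exact variance already computed, whereas the general-$s$ statement follows from the distribution-independent bound combined with the known trainability of individual matchgate circuits.

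For the quadratic case I would start directly from the closed form in \cref{cor:ffvar}. Noting that $\tr(O^2)/2^N = \Theta(1)$ for the relevant observables (indeed it equals $1$ for Pauli strings), I would examine the two terms inside the bracket. The second term carries a factor $2^{-(N-1)}$ and so decays exponentially in $N$; for large $N$ it is dominated by the first term, $24/[(k+1)(k+2)(k+3)(2N-1)]$. Since each of $(k+1),(k+2),(k+3)$ is $\Theta(k)$ and $(2N-1) = \Theta(N)$, this term is $\Theta(1/(N k^3))$, which lower-bounds the whole expression by a constant multiple of $1/(N k^3)$. This gives $\var[m] \in \Omega(1/(N k^3))$, i.e.\ $s=1$. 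The only subtlety is confirming that the exponentially small second term cannot spoil the lower bound, which is immediate since it is nonnegative.

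For a general observable I would instead invoke the distribution-independent bound of \cref{thm:general_bound}, namely $\var[m] \geq \frac{1}{k^3}\min_j \var[m_j]$, where $\var[m_j]$ is the single-circuit variance for a Haar-random element of $SO(2N)$. The trainability of matchgate circuits established by \citet{diaz2023} supplies a lower bound of the form $\var[m_j] \in \Omega(1/N^s)$, with the exponent $s$ determined by the observable (in particular $s=1$ for quadratic observables, consistent with the exact calculation above). Substituting this into the bound yields $\var[m] \in \Omega(1/(N^s k^3))$, where the $k^{-3}$ factor arises entirely from the LCU structure and the $N^{-s}$ factor from the individual circuit.

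The main obstacle is pinning down the correct exponent $s$ for a general observable, which depends on the precise single-circuit scaling result one cites from \citet{diaz2023}; the two routes must agree in the quadratic case, where $s=1$ emerges both from the exact expression and from the general bound. Beyond this consistency check, both arguments reduce to straightforward asymptotic readings, so no substantial new computation is required.
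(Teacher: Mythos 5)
Your handling of the quadratic case is exactly the paper's: the paper likewise disposes of it by reading \cref{cor:ffvar} directly, using that the exponentially small second bracketed term is nonnegative so the first term fixes the $\Omega(1/(Nk^3))$ scaling. For general observables, however, you take a genuinely different route. The paper does \emph{not} invoke \cref{thm:general_bound} here; it substitutes the $SO(2N)$ moments into the exact Dirichlet variance of \cref{cor:general_ffvar_app} and identifies the dominant term $\binom{N}{\kappa'}\binom{2N}{2\kappa'}^{-1}$, which defines the exponent concretely: $s$ is the largest $\kappa'$ with $P_{2\kappa'}(O)+C_{2\kappa'}(O)\neq 0$. Your route is more modular, and it buys distribution-independence: \cref{thm:general_bound} makes no assumption on the law of $\vec{c}$, whereas the paper's argument inherits the uniform-Dirichlet assumption from \cref{cor:general_ffvar_app}. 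The paper's route, in exchange, needs no hypothesis on $E[m_i]$ and yields an explicit characterisation of $s$.

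Two concrete caveats in your general-case argument. First, \cref{thm:general_bound} carries the hypothesis $E[m_i]=0$, which you never verify. For $\mathcal{G}=SO(2N)$, by \eqref{eq:first_moment} one has $E[m_i]=\tfrac{1}{2^N}\left(\tr(\rho_0)\tr(O)+\tr(\rho_0 P)\tr(OP)\right)$, which vanishes only for observables with no identity component and no component proportional to the parity operator $P$; the corollary as stated (and the paper's proof) covers arbitrary observables, so your argument proves a restricted version of the claim unless this hypothesis is added or the $E[m_i]\neq 0$ case is handled separately. Second, the single-circuit bound $\var[m_j]\in\Omega(1/N^s)$ that you cite from \citet{diaz2023} is not available as a black box with a predefined exponent: pinning down $s$ requires precisely the analysis the paper performs, namely writing $E[m_j^2]$ as the $\kappa'$-sum of \cref{sec:rank1exps}, whose terms scale as $\Theta(N^{-\kappa'})$, and locating the nonvanishing contributions. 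So the dominant-term reading is deferred by your proposal rather than eliminated. With the hypothesis $E[m_i]=0$ stated and $s$ defined via that sum, your argument is sound, and for that class of observables it in fact strengthens the corollary to arbitrary coefficient distributions.
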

\begin{proof}
    \hyperref[proof:bigoZ1]{Proof in \cref*{sec:app_lcu_ff}}; the case for quadratic observables follows directly from \cref{cor:ffvar}.
\end{proof}
We conclude that, for linear combinations of fermionic Gaussian states, the gradient has a lower bound polynomial in both the number of qubits, and the rank of the system. We comment on the significance of this in \cref{sec:conclusion}.

\subsection{Linear combinations of expressive unitaries}

In this section, we state results for the case where all Lie groups are equal to $\mathcal{G} = SU(2^N)$ i.e. the case where every unitary is expressive.
\begin{lemma}
\label{lem:abssquare}
If the Lie group is the space of all unitary matrices on $N$ qubits i.e. $\mathcal{G} = SU(2^N)$, and $O$ is an observable, then
\begin{align}
    E[|m_{ij}|^2] = \frac{\tr(O^2)}{2^{2N}}
\end{align}  
\end{lemma}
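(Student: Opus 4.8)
The plan is to reduce $E[|m_{ij}|^2]$ to the elementary first-moment (degree-one) Haar integral, since each of the independent unitaries $U_i,U_j$ appears exactly once together with its conjugate. Writing $d=2^N$ and using $\overline{\tr(A)}=\tr(A^\dagger)$ together with the Hermiticity of $O$ and $\rho_0$, I would first expand
\begin{align}
|m_{ij}|^2 = \tr(U_i\rho_0 U_j^\dagger O)\,\tr(O U_j \rho_0 U_i^\dagger),
\end{align}
so that the modulus squared becomes a product of two traces, each linear in $U_i, U_i^\dagger, U_j, U_j^\dagger$.

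Since $U_i$ and $U_j$ are independent, the expectation factorises over the two groups. Crucially, within this product $U_j$ occurs once as $U_j^\dagger$ and once as $U_j$, and likewise $U_i$ occurs once as $U_i$ and once as $U_i^\dagger$; no unitary appears to higher order. Hence only the elementary first-moment identity
\begin{align}
\int_{\mathcal{G}} U_{ab}\,\overline{U_{cd}}\; dU = \frac{1}{d}\,\delta_{ac}\,\delta_{bd}
\end{align}
is needed, and the full Weingarten machinery required for the $SO(2N)$ case (\cref{lem:ff_abssq}) is unnecessary. I would introduce matrix indices for the two traces and apply this identity separately to the $U_i$- and $U_j$-averages.

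Carrying out the two Kronecker-delta contractions collapses the four index sums into two decoupled ones. The indices carried by $O$ close up into $\sum_{a,d} O_{da}O_{ad}=\tr(O^2)$, while those carried by $\rho_0$ close up into $\sum_{b,c}(\rho_0)_{bc}(\rho_0)_{cb}=\tr(\rho_0^2)$, each accompanied by a factor $1/d$ from the moment formula. This yields
\begin{align}
E[|m_{ij}|^2] = \frac{\tr(\rho_0^2)\,\tr(O^2)}{2^{2N}},
\end{align}
and since the initial state $\rho_0=\ketbra{0}{0}$ is pure we have $\tr(\rho_0^2)=1$, giving the claimed expression.

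The only real care needed is index bookkeeping: tracking which conjugate pairs with which factor, contracting the deltas in the correct order, and confirming that the first moment rather than the second genuinely suffices here (in contrast to the $SO(2N)$ computation, where the smaller group forces a genuine second-moment integral). As a consistency check I would verify that this value, together with the traceless-observable conditions $E[m_i]=0$ valid for $\mathcal{G}=SU(2^N)$, is compatible with \cref{thm:var_dirichhaar}, confirming the normalisation.
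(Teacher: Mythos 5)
Your proof is correct and takes essentially the same route as the paper: the paper also reduces $E[|m_{ij}|^2]$ to two applications of the degree-$(1,1)$ Haar moment (one per independent unitary), just written in vectorised/tensor-product form rather than explicit indices, and arrives at $\tr(\rho_0^2)\tr(O^2)/d^2$ before using purity of $\rho_0$. One minor aside: your contrast with the $SO(2N)$ case is slightly off --- the proof of \cref{lem:ff_abssq} also uses only the first moment $E[U\otimes U^*]$; the extra terms there arise because the commutant of $SO(2N)$ is two-dimensional (spanned by $I$ and $P$), not because a genuine second-moment integral is required.
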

\begin{proof}
\hyperref[proof:abssquare]{Proof in \cref*{sec:app_lcu_expressive}.}
\end{proof}

\begin{corollary}\label{cor:expressive_var}
The expectation value of an observable $O$ for an LCU of Haar random unitaries has variance 
\begin{align}
\label{eq:expressive_var}
   \var[m] &= \frac{24 \cdot (\tr(O^2) - tr(O)^2/2^N) \cdot (\tr(\rho_0^2) - 1/2^N)}{(4^N -1) \cdot (k+1) \cdot (k+2) \cdot (k+3)} \\ &+ \frac{4 \cdot (k-1) \cdot (\tr(O^2)/ 2^N\tr(\rho_0^2) + \tr(O)^2)}{(k+1) \cdot (k+2) \cdot (k+3) \cdot 2^N} \\ &- \frac{\tr(O)^2}{2^N}.
\end{align}
\end{corollary}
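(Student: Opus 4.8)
The plan is to specialise the ``all Lie groups equal'' variance formula \eqref{eq:dirichlet_lcu_var} of \cref{thm:var_dirichhaar} to $\mathcal{G} = SU(2^N)$ by evaluating the three Haar moments $E[m_i]$, $E[m_i^2]$ and $E[|m_{ij}|^2]$ that appear in it, and then substituting and collecting terms. Each of these is an average of a function of one Haar-random unitary (or of two independent ones), so each reduces to a first- or second-moment twirl over $SU(2^N)$, computable with the Weingarten calculus. Writing $d = 2^N$, the target is a polynomial in the invariants $\tr(O^2)$, $\tr(O)^2$ and $\tr(\rho_0^2)$, which are precisely the quantities these twirls produce, so once the moments are in hand the corollary follows by algebra.

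First I would handle the two single-unitary moments. Since $U_i$ is Haar-random, the first-moment twirl gives $E[U_i \rho_0 U_i^\dagger] = \tr(\rho_0)\,I/d = I/d$, hence $E[m_i] = \tr(O)/d$ and $E[m_i]^2 = \tr(O)^2/d^2$. The main computation is the second moment $E[m_i^2]$. Writing $m_i^2 = \tr\big((U_i \rho_0 U_i^\dagger)^{\otimes 2}(O \otimes O)\big)$, I would apply the two-fold Haar twirl: by Schur--Weyl duality the twirl of $\rho_0^{\otimes 2}$ lies in the span of the identity and the swap $S$, and matching the scalar invariants $\tr(\rho_0)^2 = 1$ and $\tr(\rho_0^2)$ fixes the two coefficients via the Weingarten values for $S_2$. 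Tracing the resulting operator against $O \otimes O$, using $\tr(O \otimes O) = \tr(O)^2$ and $\tr\big(S\,(O \otimes O)\big) = \tr(O^2)$, yields the standard form
\begin{align}
E[m_i^2] = \frac{(\tr(\rho_0^2) - 1/d)(\tr(O^2) - \tr(O)^2/d)}{d^2 - 1} + \frac{\tr(O)^2}{d^2}.
\end{align}

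The remaining ingredient $E[|m_{ij}|^2]$ is supplied by \cref{lem:abssquare} (in its mixed-state form $\tr(O^2)\tr(\rho_0^2)/d^2$), which follows because $U_i$ and $U_j$ are independent and each contributes a separate first-moment twirl. Substituting these three moments into \eqref{eq:dirichlet_lcu_var} and combining over the common denominator $(k+1)(k+2)(k+3)$ gives the claimed three-line expression. I expect the only genuine obstacle to be the second-moment twirl for $E[m_i^2]$: one must track the Weingarten coefficients and the $1/d$ corrections carefully, since it is exactly the interplay between the $1/(d^2-1)$ and $1/d^2$ factors that produces the $(\tr(O^2) - \tr(O)^2/d)(\tr(\rho_0^2) - 1/d)/(d^2-1)$ structure in the first line. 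The subsequent bookkeeping across the three terms of \eqref{eq:dirichlet_lcu_var} is routine but must be done with care to land on the stated powers of $2^N$; note in particular that for traceless $O$ (the case $E[m_i]=0$ of \cref{thm:general_bound}) the $\tr(O)^2$ contributions vanish and the expression collapses cleanly.
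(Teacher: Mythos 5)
Your overall route is exactly the paper's: evaluate $E[m_i]$, $E[m_i^2]$ and $E[|m_{ij}|^2]$ for $\mathcal{G}=SU(2^N)$ and substitute them into the specialised formula \eqref{eq:dirichlet_lcu_var} of \cref{thm:var_dirichhaar}. The paper simply cites the literature for the two single-unitary moments and invokes \cref{lem:abssquare} for the cross term, whereas you re-derive them by Schur--Weyl/Weingarten arguments; that is the same content, and your derivations (including the observation that $E[|m_{ij}|^2]$ reduces to two independent first-order twirls, which is how the paper proves \cref{lem:abssquare}) are correct.

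The problem is your closing claim that substituting your moments ``gives the claimed three-line expression'': it does not, and the discrepancy is not routine bookkeeping. Writing $d=2^N$, your second moment $E[m_i^2]=\frac{(\tr(\rho_0^2)-1/d)(\tr(O^2)-\tr(O)^2/d)}{d^2-1}+\frac{\tr(O)^2}{d^2}$ is the genuinely correct one (since $E[X^2]=\var[X]+E[X]^2$), but the paper's proof instead inserts \eqref{eq:expressive_var_em2}, which is the \emph{variance} of $m_i$ used in the role of the second moment; the two agree only when $\tr(O)=0$. Consequently, with your moments the first line of the result acquires an extra $\frac{24\,\tr(O)^2}{4^N(k+1)(k+2)(k+3)}$ that is absent from \eqref{eq:expressive_var}. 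Moreover, under either choice of moments, the final term produced by \eqref{eq:dirichlet_lcu_var} is $-\frac{4E[m_i]^2}{(k+1)^2}=-\frac{4\tr(O)^2}{4^N(k+1)^2}$, not the $k$-independent $-\tr(O)^2/2^N$ appearing in the statement, and the $\tr(O)^2$ contribution in the second line should be suppressed by $4^N$ rather than $2^N$. All of these mismatches vanish exactly when $\tr(O)=0$ --- the case you note at the end --- which is the only regime in which the stated corollary, the paper's own substitution, and your derivation coincide. So as written there is a gap: you cannot assert that your (correct) moments reproduce \eqref{eq:expressive_var}. You should either restrict the claim to traceless observables, or carry the algebra through explicitly and present the corrected expression your moments actually yield, flagging the disagreement with the stated formula.
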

\begin{proof}
\hyperref[proof:expressive_var]{Proof in \cref*{sec:app_lcu_expressive}.}
\end{proof}
\section{Numerical Results}
\label{sec:numerical}

There are polynomial-time classical algorithms for simulating free fermion systems, up to a phase, based on the covariance matrix formalism~\citep{terhal2002}. Recent work has extended this framework to include a phase, extending classical simulation techniques to coherent superpositions \citep{dias2024classical,reardon2024improved, cudby2024gaussiandecompositionmagicstates}. In our simulations, we use the method of \citet{dias2024classical}, where each state $\ket{\psi}$ is associated with a basis state \ket{x}, and the complex overlap $\braket{x}{\psi}$, which is updated throughout the evolution of the circuit. We implement their algorithm in Python using JAX \citep{jax2018github}, and use this to compute our experimental results.

\begin{figure}[ht]
    \centering
    \includegraphics[width=0.6\linewidth]{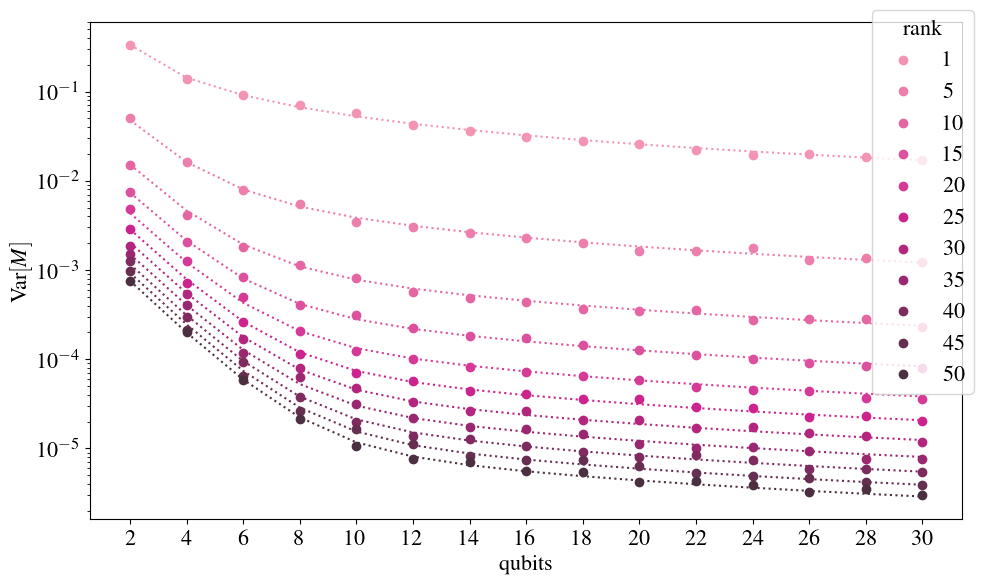}
    \caption{Variance of the $Z_1$ expectation value for linear combinations of Haar random fermionic Gaussian states. Dotted lines indicate the analytical value from \cref{eq:ffvar}, and points are computed numerically from 1000 samples. The label refers to the number of terms $k$ in the LCU. Analytical and numerical results show close agreement.}
    \label{fig:var_by_qb}
\end{figure}

We prepare coherent superpositions of fermionic Gaussian states, where each state is sampled from the distribution induced on states by the $SO(2N)$ Haar measure, and the coefficients are sampled from the uniform Dirichlet distribution. 

We numerically compute the expectation of $O = Z_1$, which from \eqref{eq:ffvar} has variance
\begin{align}
\label{eq:z1var}
    \var[m] = \frac{24}{(k+1) \cdot (k+2) \cdot (k+3) \cdot (2N-1)} + \frac{4 \cdot (k-1)}{(k+1) \cdot (k+2) \cdot (k+3) \cdot (2^{N-1})}.
\end{align}
The variance of this expectation value is calculated over $1000$ samples. This experiment is repeated for varying system sizes and fermionic Gaussian ranks. \Cref{fig:var_by_qb} plots the variances obtained numerically, and compares these with values from the analytical expression in \cref{eq:ffvar}; we observe close agreement between these.

\section{Incoherent superpositions}
\label{sec:incoherent_superpositions}
Analogous results also hold in the simpler case when we take incoherent superpositions of quantum states
\begin{align}
    \label{eq:app_incoherent_superposition}
    \widetilde{m} = \sum_i c_i \tr(U_i \rho_i U_i^\dagger O) = \sum_i c_i m_i.
\end{align}

\begin{theorem}
\label{thm:general_incoherent_bound}
The variance of the expectation value of an observable $O$ for an incoherent superposition of Haar random unitaries, for which $E[m_i] = 0$, is lower bounded by
\begin{align}
    \var(\widetilde{m}) &\geq \sum_{i=1}^k E[c_i]^2 \var[m_i] \geq \frac{1}{k}\min_i \var[m_i].
\end{align}
\end{theorem}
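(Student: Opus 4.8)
The plan is to exploit the fact that $\widetilde{m} = \sum_i c_i m_i$ is \emph{linear} in the coefficients, which makes this a streamlined version of the proof of \cref{thm:general_bound}. First I would record the two independence facts that drive everything: the coefficient vector $\vec{c}$ is sampled independently of the unitaries, so $\vec{c}$ is independent of $(m_1, \dots, m_k)$; and each $m_i = \tr(U_i \rho_i U_i^\dagger O)$ depends only on $U_i$, so the $m_i$ are mutually independent because the Haar samples are independent. Combining these with the hypothesis $E[m_i] = 0$, the mean vanishes, $E[\widetilde{m}] = \sum_i E[c_i]\,E[m_i] = 0$, and hence $\var(\widetilde{m}) = E[\widetilde{m}^2]$.

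The next step is to expand the second moment and collapse the cross terms. Writing $E[\widetilde{m}^2] = \sum_{i,j} E[c_i c_j]\, E[m_i m_j]$, where the factorisation uses independence of $\vec{c}$ from the $m$'s, every off-diagonal term with $i \neq j$ vanishes because $E[m_i m_j] = E[m_i]\,E[m_j] = 0$, leaving only the diagonal contribution $E[m_i^2] = \var[m_i]$. This yields the exact identity $\var(\widetilde{m}) = \sum_i E[c_i^2]\,\var[m_i]$.

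From here the two stated inequalities are elementary. For the first, Jensen's inequality gives $E[c_i^2] \geq E[c_i]^2$ termwise, and since $\var[m_i] \geq 0$ this produces $\var(\widetilde{m}) \geq \sum_i E[c_i]^2\,\var[m_i]$. For the second, I would bound $\sum_i E[c_i]^2 \var[m_i] \geq \bigl(\min_j \var[m_j]\bigr)\sum_i E[c_i]^2$ and then invoke the Cauchy--Schwarz (power-mean) inequality together with the normalisation $\sum_i E[c_i] = E[\sum_i c_i] = 1$ to obtain $\sum_i E[c_i]^2 \geq \frac{1}{k}\left(\sum_i E[c_i]\right)^2 = \frac{1}{k}$.

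Since each step is a routine second-moment computation, there is no substantial obstacle; the only point requiring care is the bookkeeping of which random variables are independent, as both the factorisation $E[c_i c_j m_i m_j] = E[c_i c_j]\,E[m_i m_j]$ and the annihilation of the cross terms rely entirely on independence of the coefficients from the unitaries and of the unitaries from one another. The contrast with the coherent case is instructive: there $m$ is quadratic in $\vec{c}$ and one must track $E[c_j^2]^2$, whereas here linearity produces the milder $E[c_i]^2$ and the stronger $1/k$ scaling (rather than $1/k^3$).
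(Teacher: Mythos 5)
Your proof is correct and follows essentially the same route as the paper's: both reduce to the diagonal sum $\sum_i E[c_i^2]\var[m_i]$ (you by killing cross terms early via $E[m_i]=0$, the paper by carrying the $\cov(c_i,c_j)E[m_i]E[m_j]$ terms which vanish under the same hypothesis), then apply Jensen's inequality $E[c_i^2]\geq E[c_i]^2$ and finally the bound $\sum_i E[c_i]^2 \geq 1/k$ (your Cauchy--Schwarz step is exactly the paper's H\"older lemma with $n=2$). No gaps; only the bookkeeping order differs.
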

\begin{proof}
\hyperref[proof:general_incoherent_bound]{Proof in \cref*{sec:incoherent_general_bound}.}
\end{proof}

\begin{theorem}
\label{thm:incoherent_var_dirichhaar}
For a state $\rho$ given by incoherent superposition of Haar random unitaries applied to some initial states where the coefficients are distributed according to a uniform Dirichlet distribution, the variance of the expectation value $\tr(\rho O)$ of an observable $O$ is given by
\begin{align}
    \var(\widetilde{m}) &= \sum_{i=1}^k \frac{2}{k(k+1)}  E[m_i^2] - \frac{1}{k^2}  E[m_i]^2 - \sum_{i\neq j}^k \frac{1}{k^2(k+1)}\  E[m_i] E[m_j],
\end{align}
Specialising to the case of where all Lie groups are the same yields
\begin{align}
    \var(\widetilde{m}) &= \frac{2}{(k+1)}  \var(m_i)
\end{align}
\end{theorem}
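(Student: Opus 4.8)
The plan is to exploit the fact that, unlike the coherent expectation $m = \sum_{i,j} c_i c_j m_{ij}$ treated in \cref{thm:var_dirichhaar}, the incoherent expectation $\widetilde{m} = \sum_i c_i m_i$ is \emph{linear} in the coefficients $\vec{c}$. Consequently $\widetilde{m}^2$ is only quadratic in $\vec{c}$, so computing $\var(\widetilde{m}) = E[\widetilde{m}^2] - E[\widetilde{m}]^2$ requires only the first and second moments of the uniform Dirichlet distribution, rather than the fourth-order moments needed in the coherent case. Throughout I would use that the coefficients $\vec{c}$ and the Haar-random unitaries (hence the $m_i$) are independent, and that distinct unitaries $U_i, U_j$ are mutually independent.

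First I would record the relevant Dirichlet moments for the symmetric parameter choice, for which $\alpha_0 = k$: namely $E[c_i] = 1/k$, $E[c_i^2] = \frac{2}{k(k+1)}$, and $\cov(c_i, c_j) = -\frac{1}{k^2(k+1)}$ for $i \neq j$, giving $E[c_i c_j] = \frac{1}{k(k+1)}$. Then I would expand $E[\widetilde{m}^2] = \sum_{i,j} E[c_i c_j]\, E[m_i m_j]$, splitting the sum into diagonal and off-diagonal parts. On the diagonal I substitute $E[c_i^2]$; off the diagonal I use both $E[c_i c_j] = \frac{1}{k(k+1)}$ and the independence of $U_i$ and $U_j$, which gives $E[m_i m_j] = E[m_i]\, E[m_j]$.

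Next I would compute $E[\widetilde{m}]^2 = \frac{1}{k^2}\big(\sum_i E[m_i]\big)^2$ and likewise separate it into its diagonal contribution $\frac{1}{k^2}\sum_i E[m_i]^2$ and its off-diagonal contribution $\frac{1}{k^2}\sum_{i \neq j} E[m_i]\, E[m_j]$. Subtracting this from $E[\widetilde{m}^2]$, the $E[m_i^2]$ terms carry the coefficient $\frac{2}{k(k+1)}$, the diagonal $E[m_i]^2$ terms pick up $-\frac{1}{k^2}$, and the off-diagonal $E[m_i]\, E[m_j]$ terms acquire $\frac{1}{k(k+1)} - \frac{1}{k^2} = -\frac{1}{k^2(k+1)}$, which is exactly the first displayed identity.

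For the specialisation I would set $E[m_i]$ and $E[m_i^2]$ equal across all $i$ (as holds when all Lie groups, and the initial states, coincide) and count terms: the diagonal sums contribute factors of $k$, while the off-diagonal sum runs over $k(k-1)$ pairs. Collecting the $E[m_i]^2$ contributions gives $-\big(\frac{1}{k} + \frac{k-1}{k(k+1)}\big) = -\frac{2}{k+1}$, so the expression collapses to $\frac{2}{k+1}\big(E[m_i^2] - E[m_i]^2\big) = \frac{2}{k+1}\var(m_i)$. None of these steps presents a serious obstacle; the main thing to get right is the bookkeeping of which expectations factorise under independence, together with the accurate recall of the symmetric Dirichlet covariance, since an error there would propagate into every coefficient.
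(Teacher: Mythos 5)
Your proposal is correct and follows essentially the same route as the paper: expand $\var(\widetilde{m}) = E[\widetilde{m}^2] - E[\widetilde{m}]^2$ using linearity in $\vec{c}$, factorise the off-diagonal terms via independence of the $U_i$, substitute the symmetric Dirichlet moments $E[c_i] = 1/k$, $E[c_i^2] = \tfrac{2}{k(k+1)}$, $\cov(c_i,c_j) = -\tfrac{1}{k^2(k+1)}$, and then collapse the sums when all moments coincide. Your explicit note that the specialisation requires the initial states (not just the Lie groups) to give equal moments is a slightly more careful statement of the hypothesis than the paper's, but the argument is the same.
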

\begin{proof}
\hyperref[proof:incoherent_var_dirichhaar]{Proof in \cref*{sec:incoherent_general_bound}.}
\end{proof}

\section{Conclusion}
\label{sec:conclusion}

Our primary contributions are an analytical derivation of the variance of the expectation value for states obtained from a linear combinations of unitaries, including the particular case of the variance for systems with upper bounded fermionic Gaussian rank. Thus, given any trainable circuit, we are able to construct a trainable family of circuits controlled by a rank parameter $k$.

The family of circuits we consider can be simulated classically, as is the case for many trainable parametrised circuits in the literature~\citep{cerezo_classical_2024}. An open question is to prove that this is the case for \emph{any} linear combination of trainable circuits. Importantly, the cost of classical simulation for fermionic Gaussian states with rank $k$ is $\mathcal{O}\left( k^2 \cdot N^3\right)$~\citep{dias2024classical}, while the gate count of an efficient quantum implementation is $\mathcal{O}(k\cdot N^2)$~\citep{unary_iteration, kokcu_compression, jiang_fgs_preparation}; thus, there is still potential for a speed-up when running these circuits on a quantum computer. It is important to note, however, that for current hardware, the time it takes to execute a classical operation is much lower than the time it takes to run a quantum gate. Thus, despite there being an asymptotic improvement, it is at present not a practical one.

\section{Acknowledgements}

We thank Eric Brunner and Enrico Rinaldi for reviewing this manuscript and for their insightful comments.

\bibliographystyle{plainnat}
\bibliography{references}

\appendix
\section{Derivations of main results}
\label{sec:proofs}
Assume $U_1,...,U_k$ to be unitaries sampled independently according to distributions $\mu_1,...,\mu_k$. Let $\vec{c} = [c_1,...,c_k]$ be an $l_1$-normalised positive vector sampled according to some distribution independently from the aforementioned unitaries. Let $O$ be an observable and $\rho_0$ be some initial quantum state. We are interested in characterising the random variable
\begin{align}
    m &:= \tr \left (U\rho_0 U^\dagger O \right ), \\
    U &= \sum_{j=1}^k c_j U_j,
\end{align}
which corresponds to the expectation value for an observable $O$ for a state prepared by the Linear Combination of Unitaries (LCU) procedure \citep{childs2012}.
For ease of notation, we define
\begin{align}
    m_{ij} :=& \tr(U_i \rho_0 U_j^\dagger O), \\
    m_i :=& m_{ii},
\end{align}
and express the random variable of interest as
\begin{align}
    m = \sum_{i,j}c_i c_j m_{ij}. \label{eq:simplified_trace}
\end{align}
To determine the trainability of such LCU systems, we are interested in characterising the variance of this random variable, given by
\begin{align}
    \var[m] = E[m^2] - E[m]^2.
\end{align}
Note that, since we are assuming the unitaries to be independent from $\vec{c}$, the expectation value takes the form
\begin{align}
\label{eq:general_expectation}
    E[m] =& \sum_{i,j}E[c_i c_j] \cdot E[m_{i,j}]
\end{align}
and the variance consists of two terms,
\begin{align}
    E[m]^2 =& \sum_{i,j,k,l}E[c_i c_j]E[c_k c_l] \cdot E[m_{ij}] E[m_{kl}]\\
    E[m^2] =& \sum_{i,j,k,l}E[c_i c_j c_k c_l] \cdot E[m_{ij}m_{kl}].
\end{align}
In the following sections we provide closed form expressions for these terms in specific parameter settings.

\subsection{LCUs of Haar distributed unitaries}
We first make the assumption common in the parametrised circuit trainability literature, that each unitary is \textit{i.i.d.} sampled from the Haar measure over some Lie group $\mathcal{G}$, i.e. $U_i \sim Haar(\mathcal{G})$. Using the invariance of the Haar measure under left multiplication by a fixed unitary, we obtain
\begin{lemma}\label{lem:emij_zero}
    \begin{align}
    E[m_{ij}^n] = 0, \qquad n \geq 1, i \neq j
\end{align}
\end{lemma}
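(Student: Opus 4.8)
The plan is to exploit the independence of $U_i$ and $U_j$ (which holds precisely because $i \neq j$) together with the left-invariance of the Haar measure, reducing the claim to a statement about how $m_{ij}$ transforms under a phase shift of $U_i$ alone. First I would condition on $U_j$ and regard the inner integral as an average over $U_i$ only: writing $M := \rho_0 U_j^\dagger O$, we have $m_{ij} = \tr(U_i M)$, so that
\begin{align}
    E[m_{ij}^n] = E_{U_j}\!\left[\, E_{U_i}\!\big[(\tr(U_i M))^n \,\big|\, U_j\big]\right].
\end{align}
It then suffices to show that the inner expectation vanishes for every fixed $U_j$.

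For the inner integral I would invoke left-invariance in the form $E_{U_i}[f(U_i)] = E_{U_i}[f(V U_i)]$, valid for any fixed $V \in \mathcal{G}$. Taking $V = \omega I$ to be a central scalar of the group, left multiplication sends $\tr(U_i M) \mapsto \tr(\omega U_i M) = \omega\, \tr(U_i M)$, and hence $m_{ij}^n \mapsto \omega^n m_{ij}^n$. Invariance of the measure then forces
\begin{align}
    E_{U_i}[(\tr(U_i M))^n] = \omega^n\, E_{U_i}[(\tr(U_i M))^n],
\end{align}
so that $(1 - \omega^n)\, E_{U_i}[(\tr(U_i M))^n] = 0$. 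As soon as $\mathcal{G}$ contains a central phase with $\omega^n \neq 1$, the inner expectation is zero, and averaging over $U_j$ gives the claim. Note that $i \neq j$ is essential here: were $i = j$, the same substitution would send $m_{ii} = \tr(U_i \rho_0 U_i^\dagger O)$ to $|\omega|^2 m_{ii} = m_{ii}$, the phases cancelling between $U_i$ and $U_i^\dagger$, so no information could be extracted.

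The main obstacle is guaranteeing the existence of a central phase $\omega$ with $\omega^n \neq 1$ inside $\mathcal{G}$. For the full unitary group every $\omega = e^{i\phi}$ is available, and the argument closes for all $n \geq 1$ at once. For $\mathcal{G} = SU(2^N)$ only the $2^N$-th roots of unity are central, but every moment entering the downstream variance computations has $n < 2^N$, so a primitive root furnishes $\omega^n \neq 1$ and the conclusion still holds in all cases we use. The delicate case is $SO(2N)$, whose own centre contains no nontrivial complex scalar; there the relevant phase freedom is the global-phase ($U(1)$) ambiguity in the Gaussian-unitary lift of an $SO(2N)$ element, and the argument goes through provided this phase is carried by the implemented unitaries. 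This is the one point I would treat with care, since the conclusion then hinges on the phase convention used when lifting to the Hilbert-space representation rather than on a central element of $SO(2N)$ itself.
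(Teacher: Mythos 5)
Your argument is essentially the paper's own proof: the paper likewise multiplies $U_i$ by a scalar phase (it uses $e^{i\pi/n}I$, so that the trace picks up $\omega^n=-1$) and invokes left-invariance of the Haar measure to conclude $E[m_{ij}^n]=-E[m_{ij}^n]=0$. The difference is that the paper inserts this phase without checking that it belongs to $\mathcal{G}$, whereas you correctly isolate this as the load-bearing hypothesis: the invariance $E[f(U_i)]=E[f(VU_i)]$ holds only for $V\in\mathcal{G}$; otherwise $VU_i$ is distributed uniformly on the coset $V\mathcal{G}$, not on $\mathcal{G}$, and no cancellation can be inferred.

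Your caution about this point is not pedantry; the hypothesis can genuinely fail, and with it the lemma. For the matchgate group (the spin representation of $\mathrm{Spin}(2N)$ generated by quadratic Hamiltonians, with no global-phase generator) the only central scalars are $\pm I$, so the phase trick covers odd $n$ but not $n=2$, and for $n=2$ the statement is in fact false. Concretely, for $N=2$ this group is $\{\mathrm{diag}(A,B): A,B\in SU(2)\}$ acting block-diagonally on the even/odd parity sectors, and $E_{A}[A\otimes A]$ is the rank-one projector onto the singlet $\ket{s}$, not zero. Identifying the even sector with $\mathbb{C}^2$, take $\rho_0$ maximally mixed on that sector and $O=Z$ on it; then $m_{ij}=\tfrac12\tr\bigl(A_iA_j^\dagger Z\bigr)$ and $E[m_{ij}^2]=\tfrac14\bra{s}(Z\otimes Z)\ket{s}=-\tfrac14\neq 0$. (The same computation with $\mathcal{G}=SU(2)$, i.e.\ $N=1$ and $n=2=2^N$, shows that your restriction $n<2^N$ for $SU(2^N)$ is sharp.) So the lemma requires an explicit assumption: either that $\mathcal{G}$ contains a central scalar $\omega$ with $\omega^n\neq 1$, or, as you suggest, that the implemented unitaries carry a uniformly random global phase; the latter is harmless elsewhere in the paper, since every other moment computation ($E[m_i]$, $E[m_i^2]$, $E[|m_{ij}|^2]$) involves only $U\otimes U^*$, in which global phases cancel. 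The one improvement I would ask of your write-up is to promote your closing discussion into a stated hypothesis of the lemma rather than a remark; with that change your proof is complete and, unlike the paper's, correctly identifies the condition under which the statement is true.
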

\begin{proof}
\begin{align}
    E[m_{ij}^n] =& E[\tr(U_i \rho_0 U_j^\dagger O)^n]  \\
 =& E[\tr((e^{\frac{i\pi}{n}}I)U_i \rho_0 U_j^\dagger O)^n] \\ =& -E[m_{ij}^n]\\
    \implies &E[m_{ij}^n] = 0
\end{align}
\qed
\end{proof}
Using \cref{lem:emij_zero}, the expectation value \eqref{eq:general_expectation} of $m$ may be simplified by removing cross terms, giving
\begin{align}
\label{eq:general_e2_m}
    E[m]^2 =& \sum_{i,j=1}^k E[c_i^2]E[c_j^2] \cdot E[m_i] E[m_j].
\end{align}
Likewise, $E[m^2]$ may be reduced to the following non-zero terms
\begin{align}
\label{eq:general_e_m2}
    E[m^2] =& \sum_{i=1}^k E[c_i^4] \cdot E[m_{i}^2] \nonumber \\
           +& \sum_{i \neq j}^k E[c_i^2 c_j^2] \cdot E[m_i m_j] \nonumber\\
           +& \sum_{i \neq j}^k E[c_i^2 c_j^2] \cdot E[|m_{i,j}|^2].
\end{align}

\subsection{Bound on variance for LCUs of Haar distributed unitaries}
\label{sec:lcu_haar}

\begin{lemma}
\label{lem:holder}
For any distribution on the space of $l_1$-normalised vectors, it is the case that
\begin{align}
    \sum_{i=1}^k E[c_i]^n \geq \frac{1}{k^{n-1}} 
\end{align}
\end{lemma}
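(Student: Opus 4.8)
The plan is to reduce the probabilistic statement to an elementary inequality about a point in the probability simplex. Write $a_i := E[c_i]$. Since $\vec{c}$ is supported on the space of $l_1$-normalised nonnegative vectors, we have $c_i \geq 0$ and $\sum_{i=1}^k c_i = 1$ pointwise; taking expectations and using linearity gives $a_i \geq 0$ and $\sum_{i=1}^k a_i = 1$. Thus the claim is equivalent to the purely deterministic assertion that any $a_1,\dots,a_k \geq 0$ with $\sum_i a_i = 1$ satisfy $\sum_{i=1}^k a_i^n \geq k^{-(n-1)}$, and no further properties of the distribution are needed.

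For the deterministic inequality I would apply Hölder's inequality with the conjugate exponents $p = n$ and $q = n/(n-1)$, pairing the vector $(a_i)_i$ against the all-ones vector:
\begin{align}
    1 = \sum_{i=1}^k a_i \cdot 1 \leq \left( \sum_{i=1}^k a_i^n \right)^{1/n} \left( \sum_{i=1}^k 1 \right)^{(n-1)/n} = \left( \sum_{i=1}^k a_i^n \right)^{1/n} k^{(n-1)/n}.
\end{align}
Raising both sides to the $n$-th power and rearranging yields $\sum_{i=1}^k a_i^n \geq k^{-(n-1)}$, which is exactly the bound after substituting back $a_i = E[c_i]$. Equivalently, one could invoke the power-mean (or Jensen) inequality applied to the convex map $x \mapsto x^n$, which gives $\tfrac{1}{k}\sum_i a_i^n \geq (\tfrac{1}{k}\sum_i a_i)^n = k^{-n}$ and the same conclusion after multiplying by $k$; I would use whichever is cleaner to state.

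The case $n = 1$ should be handled separately but is immediate, since then the left-hand side is $\sum_i a_i = 1$ and the right-hand side is $k^{0} = 1$, so equality holds; for $n > 1$ the exponents $p,q$ are genuinely conjugate and Hölder applies. I do not anticipate a substantive obstacle here: the only thing to be careful about is justifying the two structural facts $\sum_i E[c_i] = 1$ and $E[c_i] \geq 0$ cleanly from the hypothesis that $\vec{c}$ lives on the $l_1$-normalised nonnegative vectors, after which the inequality is a standard application of Hölder and requires no case analysis on the underlying distribution.
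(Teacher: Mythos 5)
Your proof is correct and follows essentially the same route as the paper: both apply H\"older's inequality with conjugate exponents $n$ and $n/(n-1)$, pairing $(E[c_i])_i$ against the all-ones vector and using $\sum_i E[c_i]=1$. Your explicit handling of the $n=1$ case and the justification that $\sum_i E[c_i]=1$ follows from pointwise $l_1$-normalisation are minor tidiness improvements over the paper's presentation, not a different argument.
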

\begin{proof}
By Holder's inequality,
\begin{align}
    &\sum_{i=1}^k \left ( (E[c_i]^n)^\frac{1}{n} \cdot 1^{1-\frac{1}{n}} \right )^1 \leq \left ( \sum_{i=1}^k (E[c_i]^n)^1 \right )^{1/n} \cdot \left ( \sum_{i=1}^k 1^1 \right )^{1-\frac{1}{n}} \\
    &\iff \sum_{i=1}^k E[c_i] \leq \left ( \sum_{i=1}^k E[c_i]^n \right )^{1/n} \cdot k^{\frac{n-1}{n}} \\
    &\iff 1 \leq \left ( \sum_{i=1}^k E[c_i]^n \right ) \cdot k^{n-1} \\ 
    &\iff \frac{1}{k^{n-1}} \leq \sum_{i=1}^k E[c_i]^n  
\end{align}
\qed
\end{proof}

\begin{theorem*}{\ref{thm:general_bound}}
The variance of the expectation value of an observable $O$ for an LCU of Haar random unitaries, for which $E[m_i] = 0$, is lower bounded by
\begin{align}
    \var[m] &\geq \sum_{j=1}^k E[c_j^2]^2 \var[m_j] \geq \frac{1}{k^3}\min_j \var[m_j].
\end{align}
\end{theorem*}
\begin{proof}
\label{proof:general_bound}
From the previous section, the variance may be stated as
\begin{align}
    \var[m] =& \sum_{i=1}^k E[c_i^4] E[m_i^2] + \sum_{i \neq j}^k E[c_i^2 c_j^2] E[m_i] E[m_j] + \sum_{i \neq j}^k E[c_i^2 c_j^2] \cdot E[|m_{i,j}|^2] \nonumber\\ &- \sum_{i,j=1}^k E[c_i^2] E[c_j^2] E[m_i] E[m_j] \\
    &\geq \sum_{i=1}^k E[c_i^4] E[m_i^2] - E[c_i^2]^2 E[m_i]^2 + \sum_{i \neq j}^k (E[c_i^2 c_j^2] - E[c_i^2] E[c_j^2]) E[m_i] E[m_j] \\
    &= \sum_{i=1}^k E[c_i^4] E[m_i^2] - E[c_i^2]^2 E[m_i]^2 + \sum_{i \neq j}^k \cov(c_i^2, c_j^2) E[m_i] E[m_j] \label{eq:variance} \\
    &\geq \sum_i^k E[c_i]^4 \var[m_i] + \sum_{i \neq j}^k \cov(c_i^2, c_j^2) E[m_i] E[m_j] \label{eq:jensen} \\
    &\geq \frac{1}{k^3} \min_i \var[m_i] + \sum_{i \neq j}^k \cov(c_i^2, c_j^2) E[m_i] E[m_j] \label{eq:holder}
\end{align}
Where in \eqref{eq:jensen} we have used Jensen's inequality and \eqref{eq:holder} we have used \cref{lem:holder}. If $E[m_j]=0$, which happens e.g. if the observable is a Pauli matrix in the case of the unitary group, or if it is a Pauli not spanning all qubits in the case of the special orthogonal group under the Jordan-Wigner transform, then
\begin{align}
    \var[m] &\geq \sum_{j=1}^k E[c_j^2]^2 \var[m_j] \geq \frac{1}{k^3} \min_i \var[m_i].
\end{align}
\qed
\end{proof}

\subsection{Haar LCUs with uniform Dirichlet coefficients}
\label{sec:lcu_dirichlet_haar}

Adding the condition that the coefficients of the linear combination are sampled from a uniform Dirichlet distribution, $\vec{c} \sim \mathrm{Dir}(1 \dots 1)$, we obtain the statement of \cref{thm:var_dirichhaar}, which we restate and prove below.
\begin{theorem*}{\ref{thm:var_dirichhaar}}
\label{thm:var_dirichhaar_app}
For state a $\rho$ prepared by a uniform Dirichlet linear combination of Haar random unitaries $\sum_{j=1}^k c_j\cdot U_j \ket{0}$, the variance of the expectation value $\tr(\rho O)$ of an observable $O$ is given by
\begin{align}
\label{eq:dirichlet_lcu_var_general_app}
\var[m] = \sum_{i=1}^k \frac{24 \cdot E[m_i^2]}{k \cdot (k+1) \cdot (k+2) \cdot (k+3)} + \sum_{i\neq j}^k \frac{4 \cdot (E[m_i]E[m_j] + E[|m_{ij}|^2])}{k \cdot (k+1) \cdot (k+2) \cdot (k+3)} - \sum_{i,j=1}^k \frac{4 \cdot E[m_i]E[m_j]}{k^2 (k+1)^2}.
\end{align}
Specialising to the case of where all Lie groups are the same yields
\begin{align}
\label{eq:dirichlet_lcu_var_app}
\var[m] = \frac{24 \cdot E[m_i^2]}{(k+1) \cdot (k+2) \cdot (k+3)} + \frac{4 \cdot (k-1) \cdot (E[m_i]^2 + E[|m_{ij}|^2])}{(k+1) \cdot (k+2) \cdot (k+3)} - \frac{4 \cdot E[m_i]^2}{(k+1)^2}.
\end{align}
\end{theorem*}
\begin{proof}
\label{proof:var_dirichhaar}
Taking the coefficients to be drawn from the uniform Dirichlet distribution, i.e. $\vec{c} \sim \mathrm{Dir}(1 \dots 1)$, closed form expressions for the moments of the coefficients (\cref{sec:dirich}) help reduce \eqref{eq:general_e2_m}, \eqref{eq:general_e_m2} to
\begin{align}
    E[m]^2 =& \sum_{i,j=1}^k \frac{4}{k^2(k+1)^2} \cdot E[m_i]E[m_j]
\end{align}
\begin{align}
    E[m^2] =& \sum_{i=1}^k \frac{24}{k (k+1) (k+2) (k+3)} \cdot E[m_{i}^2]\nonumber\\
           +& \sum_{i \neq j}^k \frac{4}{k (k+1) (k+2) (k+3)} \cdot E[m_i m_j]\nonumber\\
           +& \sum_{i \neq j}^k \frac{4}{k (k+1) (k+2) (k+3)} \cdot E[|m_{i,j}|^2].
\end{align}
This yields an expression for the variance as
\begin{align}
    \var[m] = \sum_{i=1}^k \frac{24 \cdot E[m_{i}^2]}{k (k+1) (k+2) (k+3)} 
           + \sum_{i \neq j}^k \frac{4 \cdot (E[m_i]E[m_j] + E[|m_{i,j}|^2])}{k (k+1) (k+2) (k+3)} - \sum_{i,j=1} \frac{4 \cdot E[m_i]E[m_j]}{k^2 (k+1)^2}.
\end{align}
Specialising to the case where all Lie groups are the same yields
\begin{align}
    \var[m] = \frac{24 \cdot E[m_{i}^2]}{(k+1) (k+2) (k+3)} 
           + \frac{4 \cdot (k-1)  \cdot (E[m_i]^2 + E[|m_{i,j}|^2])}{(k+1) (k+2) (k+3)} - \frac{4 \cdot E[m_i]^2}{(k+1)^2}.
\end{align}
\qed
\end{proof}

\subsection{Bound on variance for incoherent superpositions of Haar distributed unitaries}
\label{sec:incoherent_general_bound}
Consider the expectation value of an incoherent superposition of states prepared by parametrised quantum circuits
\begin{align}
    \widetilde{m} = \sum_i c_i \tr(U_i \rho_i U_i^\dagger O) = \sum_i c_i m_i.
\end{align}
Then,
\begin{align}
    E[\widetilde{m}] &= \sum_i E[c_i] E[m_i], \label{eq:incoherent_expectation} \\
    \var(\widetilde{m}) &= \sum_{i=1}^k E[c_i^2]  E[m_i^2] - E[c_i]^2  E[m_i]^2 + \sum_{i\neq j}^k \cov(c_i, c_j)  E[m_i] E[m_j], \label{eq:incoherent_variance}
\end{align}
and we have the following result.
\begin{theorem*}{\ref{thm:general_incoherent_bound}}
The variance of the expectation value of an observable $O$ for an incoherent superposition of Haar random unitaries, for which $E[m_i] = 0$, is lower bounded by
\begin{align}
    \var(\widetilde{m}) &\geq \sum_{j=1}^k E[c_j]^2 \var[m_j] \geq \frac{1}{k}\min_j \var[m_j].
\end{align}
\end{theorem*}
\begin{proof}
\label{proof:general_incoherent_bound}
We have that
\begin{align}
    \var(\widetilde{m}) &= \sum_{i=1}^k E[c_i^2]  E[m_i^2] - E[c_i]^2  E[m_i]^2 + \sum_{i\neq j}^k \cov(c_i, c_j)  E[m_i] E[m_j] \\
    &\geq \sum_{i=1}^k E[c_i]^2  \var(m_i) + \sum_{i\neq j}^k \cov(c_i, c_j)  E[m_i] E[m_j] \label{eq:general_incoherent_bound_1} \\
    &\geq \frac{1}{k} \min_i \var(m_i) + \sum_{i\neq j}^k \cov(c_i, c_j)  E[m_i] E[m_j], \label{eq:general_incoherent_bound_2}
\end{align}
where in \eqref{eq:general_incoherent_bound_1} we used Jensen's inequality and in \eqref{eq:general_incoherent_bound_2} we used \cref{lem:holder}.
\qed
\end{proof}

\subsection{Haar incoherent superpositions with uniform Dirichlet coefficients}

\begin{theorem*}{\ref{thm:incoherent_var_dirichhaar}}
For a state $\rho$ given by incoherent superposition of Haar random unitaries applied to some initial states where the coefficients are distributed according to a uniform Dirichlet distribution, the variance of the expectation value $\tr(\rho O)$ of an observable $O$ is given by
\begin{align}
\label{eq:app_general_incoherent_var_dirichhaar}
    \var(\widetilde{m}) &= \sum_{i=1}^k \frac{2}{k(k+1)}  E[m_i^2] - \frac{1}{k^2}  E[m_i]^2 - \sum_{i\neq j}^k \frac{1}{k^2(k+1)}\  E[m_i] E[m_j],
\end{align}
Specialising to the case of where all Lie groups are the same yields
\begin{align}
\label{eq:app_incoherent_var_dirichhaar}
    \var(\widetilde{m}) &= \frac{2}{(k+1)}  \var(m_i)
\end{align}
\end{theorem*}
\begin{proof}
\label{proof:incoherent_var_dirichhaar}
\eqref{eq:app_general_incoherent_var_dirichhaar} follows directly from replacing \eqref{eq:dirichlet_e}, \eqref{eq:dirichlet_e2} from \cref{sec:dirichlet_moments} into \eqref{eq:incoherent_expectation} from \cref{sec:incoherent_general_bound}. Specialising to the case where all Lie groups are equal, we obtain
\begin{align}
    \var(\widetilde{m}) &= \frac{2}{k+1}  E[m_i^2] - \frac{1}{k}  E[m_i]^2 - \frac{k-1}{k(k+1)}\  E[m_i]^2, \\
    &= \frac{2}{k+1}  E[m_i^2] - \frac{k + 1 + k - 1}{k(k+1)}\  E[m_i]^2, \\
    &= \frac{2}{k+1}  E[m_i^2] - \frac{2k}{k(k+1)}\  E[m_i]^2, \\
    &= \frac{2}{k+1}  E[m_i^2] - \frac{2}{k+1}\  E[m_i]^2, \\
    &= \frac{2}{k+1}  \var (m_i).
\end{align}
\qed
\end{proof}

\section{Dirichlet Moments}\label{sec:dirich}
\label{sec:dirichlet_moments}
Here, we consider moments of a random variable $c$ distributed on the space of positive $l_1$ normalised vectors according to the Dirichlet distribution $c = (c_1,...c_k) \sim \dir(\vec{\alpha})$, which we assume to be uniform, i.e. $\alpha = (1,...,1)$. Define $\alpha_0 := \sum \alpha_i$ and $\beta_i := \alpha_0 - \alpha_i$. For the Dirichlet distribution, the marginal distribution of the $i$th coordinate $c_i$ is a Beta$(\alpha_i, \beta_i)$ distribution. The $m$th moment of a random variable $c_i$ following a Beta$(\alpha, \beta)$ distribution is
\begin{align}
    E[c_i^s] = \prod_{j=0}^{s-1}\frac{\alpha + j}{\alpha+\beta+j}
\end{align}
Thus, for $\alpha=(1,...,1)$, $\alpha_i = 1$ and $\beta_i = k-1$, and we obtain
\begin{align}
    E[c_i^s] = \prod_{j=0}^{s-1}\frac{1 + j}{k+j} = \frac{s!}{\frac{(k+s-1)!}{(k-1)!}} = \frac{s!(k-1)!}{(k+s-1)!} = \frac{1}{{k+s-1 \choose s}}
\end{align} 
Specialising for $s=1, 2, 4$
\begin{align}
    E[c_j^4]&=\frac{24}{k(k+1)(k+2)(k+3)}, \label{eq:dirichlet_e4} \\
    E[c_j^2] &= \frac{2}{k(k+1)}, \label{eq:dirichlet_e2} \\
    E[c_j] &= \frac{1}{k} \label{eq:dirichlet_e}
\end{align}
Further,
\begin{align}
    E\left[\prod_{i=1}^k c_i^{m_i} \right] =\frac{\Gamma \left(\sum \limits _{i=1}^k\alpha_i\right)}{\Gamma \left[\sum \limits _{i=1}^k(\alpha_i+m_i)\right]}\times \prod _{i=1}^k\frac{\Gamma (\alpha_i+m_i)}{\Gamma (\alpha_i)}.
\end{align}
Hence,
\begin{align}
    E[c_i c_j] &= \frac{1}{k(k+1)} \\
    Cov(c_i, c_j) &= E[c_i c_j] - E[c_i]E[c_j] \\ 
    &= \frac{1}{k(k+1)} - \frac{1}{k^2} \\
    &= - \frac{1}{k^2(k+1)}\\
\end{align}
Setting $n=2$, we obtain
\begin{align}
    E[c_i^2 c_j^2] &= \frac{4}{k (k+1) (k+2) (k+3)} \\
    Cov(c_i^2, c_j^2) &= E[c_i^2 c_j^2] - E[c_i^2]E[c_j^2] \\ 
    &= \frac{4}{k (k+1)}\left ( \frac{1}{(k+2)(k+3)} - \frac{1}{k(k+1)} \right ) \\
    &= - \frac{8(2k+3)}{k^2 (k+1)^2 (k+2) (k+3)} \\
\end{align}

\section{Brief review of essential concepts}

\subsection{Lie groups}
\label{sec:lie_groups}

The Lie group $\mathcal{G}_j$ generated by a parametrised quantum circuit takes the form~\citep{DAlessandro2021}
\begin{align}
    \mathfrak{g}_j &= \left \langle  \left \{\partial_{\theta_i^{(j)}} U_j(\vec{\theta^{(j)}}) : i \in \{1,...,|\vec{\theta^{(j)}}|\} \right \} \right \rangle  \\
    \mathcal{G}_j &= e^{\mathfrak{g}_j} = \{e^H: H \in \mathfrak{g_j}\},
\end{align}
where $|\vec{\theta^{(j)}}|$ is the length of the vector $\theta^{(j)}$ and $\langle S \rangle$ denotes the \emph{Lie algebra} generated by a set $S$ i.e. the set $S^\infty$, where $S^{(i)} = \{[H_1, H_2] : H_1, H_2 \in S^{(i-1)}\}, S^{(0)} = S$.

\subsection{Free fermion systems}
\label{sec:free_fermions}

Free fermion systems are idealised systems of weakly interacting electrons. Their particularly simple form is useful in applications such as the approximate description of materials via density functional theory~\citep{Jones2015}, or in the Hartree-Fock approximation to interacting quantum states~\citep{Cao2019}. Moreover, these systems feature as emerging quasiparticles in the BCS theory of superconductivity~\citep{Bardeen1957}. Separately, computations based on these systems have been shown to be equivalent to the problem of finding perfect matchings on graphs~\citep{Valiant}, showing that they can be simulated classically in polynomial time and giving rise to the term \emph{matchgate}. In this section, we give the notation and main concepts we use here in the context of free fermion systems. 

A system of $n$ fermions is described by a set of creation and annihilation operators: $a_j^{\dagger}, a_j$ satisfying the Canonical Anticommutation Relations (CAR),
\begin{align}
    &\{a_j, a_k^{\dagger}\} = \delta_{jk}\mathbb{I},\\
    &\{a_j, a_k\} = 0,
\end{align}
where $\{a, b\} = ab + ba$. The creation operators define basis states for multi-fermion systems as
\begin{align}
    \ket{x_1 \dots x_n} := (a_1^{\dagger})^{x_1} \dots (a_n^{\dagger})^{x_n}\ket{0}.
\end{align}
For instance, $\ket{1011} := a_1^{\dagger}a_2^{\dagger} a_4^{\dagger}\ket{0}$. It is common to use an equivalent operator representation based on \textit{Majorana operators}, which are defined as
\begin{align}
    c_{2j-1} &= a_j + a_j^{\dagger},\\
    c_{2j} &= i(a_j - a_j^{\dagger}),
\end{align}
which obey the relations
\begin{align}
    \{c_j, c_k\} &= 2\delta_{jk},\\
    c_j^2 &= 2I.
\end{align}
Free fermion systems can be described in terms of a quadratic fermionic Hamiltonian (which we abbreviate to ``quadratic Hamiltonian"). In terms of Majorana fermions, such a Hamiltonian takes the form
\begin{align}
\label{eq:quadratic_hamiltonian}
  H = i \sum_{j, k} h_{j, k} c_j c_k.  
\end{align}
A fermionic Gaussian unitary is obtained by exponentiating a quadratic Hamiltonian; these correspond to elements of the Lie group $SO(2N)$.

\section{Moments of the Haar measure over the special orthogonal group \citep{diaz2023}}\label{sec:rank1exps}
Here, we summarise the results of \cite{diaz2023}, who provide closed form expressions of $E[m_{i}]$, $E[m_{i}^2]$ when the Lie group is $\mathcal{G} = SO(2N)$, and use this to compute these moments for quadratic observables and fermionic Gaussian initial states. We know from~\cite[Eq. (54)]{diaz2023} that
\begin{align}
    E[m_j] = \frac{1}{2^N}(\tr(\rho_0)\tr(O) + \tr(\rho_0 P)\tr(O P)). \label{eq:first_moment}
\end{align}
where $P := Z_1...Z_n := (-i)^{n} c_1...c_{2n}$ is the parity operator. Define
\begin{align}
    c^b := c_1^{\alpha_1}...c_{2N}^{\alpha_{2N}}, \quad b = (\alpha_1, ... \alpha_{2N}), \quad \alpha \in \{0, 1\}
\end{align}
Assume the observable is a string of Majoranas. Then, the only case where $O$ is proportional to the identity is when $b=(0,...,0)$; in all other cases, it is a product of Majoranas. Likewise, the only case where $OP$ is proportional to the identity is when $b=(1,...,1)$; in all other cases, it will be a product of Majoranas. In particular, note that the expectation value \eqref{eq:first_moment} will vanish for any quadratic observable regardless of the choice of initial state. 

From \citep[Eq. (5)]{diaz2023}, we also know that
\begin{align}
    \label{eq:diaz_em2_gaussian}
    E[m_j^2] &= \sum_{\kappa = 0}^{2N} \frac{P_\kappa(\rho) P_\kappa(O) + C_\kappa(\rho) C_\kappa(O)}{{2N \choose \kappa}},
\end{align}
where
\begin{align}
    P_\kappa\left(\sum_b a_b c^b \right) &:= (-1)^{\left\lfloor\frac{\kappa}{2}\right\rfloor} \sum_{|b|=\kappa} a_b^2,\\
    C_\kappa\left(\sum_b a_b c^b\right) &:= i^{\kappa \, \text{mod} \, 2} \sum_{|b|=\kappa} (-1)^{\sum_j \alpha_j(j  - 1)}\overline{a_b}a_{\bar{b}}.
\end{align}
with 
\begin{align}
\bar b &= (1- \alpha_1, ... 1-\alpha_{2N}), \\
|b|&=\sum_i \alpha_i.
\end{align}
For a fermionic Gaussian state $\rho$ (see \citep[Sec. VIII A.]{diaz2023})
\begin{align}
    \label{eq:diaz_em2}
    P_\kappa(\rho) &:= {N \choose \kappa/2}\\
    C_\kappa(\rho) &:= P_\kappa(\rho).
\end{align}
Thus, the desired expression is (see \citep[Eq. (97)]{diaz2023})
\begin{align}
    E[m_j^2] &= \sum_{\kappa' = 0}^{n} \frac{1}{2^N} {N \choose \kappa'} {2N \choose 2\kappa'}^{-1}  (P_{2\kappa'}(O) + C_{2\kappa'}(O)),
\end{align}
where $\kappa' = \kappa / 2$. Since, for a quadratic observable,
\begin{align}
    P_2(O) &= \tr(O^2), \\
    P_\kappa(O) &= 0, \text{ for } \kappa \neq 2, \\ 
    C_\kappa(O) &= 0, 
\end{align}
we obtain
\begin{align}
    E[m_j^2] = \frac{\tr(O^2)}{2^N}\frac{1}{2N - 1}.
\end{align}

\section{Derivations of results for specific Lie groups}

\subsection{Linear combinations of fermionic Gaussian unitaries / matchgate circuits}
\label{sec:app_lcu_ff}

In this section, we give a version of \cref{thm:var_dirichhaar} specialised to the case where we have a linear combination of parametrised fermionic Gaussian unitaries.

\begin{lemma*}{\ref{lem:ff_abssq}}
\label{lem:ff_abssq_app}
If the Lie group is the space of all fermionic Gaussian unitaries on $N$ qubits i.e.\ $\mathcal{G} = SO(2N)$, and $O$ is an observable, then
\begin{align}
    \label{eq:ff_abssq_app}
    E[|m_{ij}|^2] &= \frac{1}{2^{2N}} \tr(\rho_0^2)\tr(O^2) + \frac{2}{2^{2N}} \tr(\rho_0^2P)\tr(O^2P) + \frac{1}{2^{2N}} \tr(\rho_0P\rho_0P)\tr(OPOP)
\end{align}
where $P$ is the fermionic parity operator.
\end{lemma*}
\begin{proof}
\label{proof:ff_abssq}
    \begin{align}
        E[|m_{ij}|^2] =& E[\tr(\rho_0U_i^\dagger O U_j)\tr(\rho_0U_j^\dagger O U_i)]\\
        =& E[\tr((\rho_0 \otimes \rho_0) (U_i^\dagger \otimes U_i^T) (O \otimes O) (U_j \otimes U_j^*))]\\
        =& \tr((\rho_0 \otimes \rho_0) E[U_i^\dagger \otimes U_i^T] (O \otimes O) E[U_j \otimes U_j^*])\\
        =& \tr((\rho_0 \otimes \rho_0) E[U_i \otimes U_i^*] (O \otimes O) E[U_j \otimes U_j^*])
    \end{align}
    We may now use the identity $E_{U \sim Haar(\mathcal{G})}[U \otimes U^*] = \sum_i^{dim(commutant)} \ket{P_i}\rangle\langle\bra{P_i}$, where $\{P_i\}_{i}$ forms an orthonormal basis for the commutant of $\mathcal{G}$. We refer the reader to \cite{mele2024introduction} for further discussion on expectations calculated over Haar measures.
    For $\mathcal{G} = SO(2N)$, an orthonormal basis for the commutant consists of $\{\frac{1}{\sqrt{2^n}}I, \frac{1}{\sqrt{2^n}}P \}$, where $P$ is the fermionic parity operator. Using this expression, we have
    \begin{align}
        E[|m_{ij}|^2] =& \frac{1}{2^{2N}} \tr((\rho_0 \otimes \rho_0) (\ket{I}\rangle\langle\bra{I} + \ket{P}\rangle\langle\bra{P}) (O \otimes O) (\ket{I}\rangle\langle\bra{I} + \ket{P}\rangle\langle\bra{P}))\\
        =& \frac{1}{2^{2N}} \tr(\rho_0^2)\tr(O^2) + \frac{2}{2^{2N}} \tr(\rho_0^2 P)tr(O^2P) + \frac{1}{2^{2N}} \tr(\rho_0 P\rho_0 P)\tr(OPOP)
    \end{align}
    which is the desired expression.
    This expression may be expressed in terms of the commuting and anti-commuting components of the initial state and measurement operator. We have for any linear operator $A$ on $\mathbb{C}^d$, $A = A_+ + A_-$, where $A_+ := \Pi_+A\Pi_+ + \Pi_-A\Pi_-$, $A_- := \Pi_+A\Pi_- + \Pi_-A\Pi_+$ are the parity preserving and parity flipping components of $A$. Here $\Pi_+$ and $\Pi_-$ are projectors onto the even and odd subspaces, respectively.
    Employing this decomposition for $O$ and $\rho_0$, we may rewrite our previous expression as
    \begin{align}
    E[|m_{ij}|^2] &= \frac{1}{2^{2N}}\tr(\rho_0^2)\tr(O^2) + \frac{2}{2^{2n}}\tr(\rho_{0+}^2P)\tr(O_+^2P) + \frac{1}{2^{2N}}\tr(\rho_{0+}^2 - \rho_{0-}^2) 
    \end{align}
    \qed
\end{proof}

\begin{corollary*}{\ref{cor:ffvar}}
The expectation value of a quadratic observable $O$ for an LCU of Haar random fermionic Gaussian unitaries with an even, pure fermionic Gaussian initial state, has variance 
\begin{align}
\label{eq:ffvar_app}
    \text{\textit{Var}}[m] = \tr(O^2/2^N) \left [ \frac{24}{(k+1) \cdot (k+2) \cdot (k+3) \cdot (2N-1)} + \frac{4 \cdot (k-1)}{(k+1) \cdot (k+2) \cdot (k+3) \cdot 2^{N-2}} \right ].
\end{align}
\end{corollary*}
\begin{proof}
\label{proof:ffvar}
For such a system, using the results from~\cite{diaz2023} in \cref{sec:rank1exps}, the moments can be worked out to be
\begin{align}
    E[m_i] =&~0,\\
    E[m_i^2] =&~\frac{\tr(O^2/2^N)}{2N-1}.
\end{align}
For the $E[|m_{ij}|^2]$ term, we use \cref{lem:abssquare}. Since $O$ is assumed to be quadratic, then  $PO = OP$. Moreover, since $\rho_0$ is assumed to be pure and even, $P\rho_0 = \rho_0 P = \rho_0$, $\tr(\rho_0^2) = 1$. Thus,  we have
\begin{align}
    E[|m_{ij}|^2] &= \frac{1}{2^{2N}} \tr(\rho_0^2)\tr(O^2) + \frac{2}{2^{2N}} \tr(\rho_0^2P)\tr(O^2P) + \frac{1}{2^{2N}} \tr(\rho_0P\rho_0P)\tr(OPOP)\\
    &= \frac{1}{2^{2N}} \tr(O^2) + \frac{1}{2^{2N}} \tr(O^2) \\
    &= \frac{\tr(O^2)}{2^{2N-1}}
\end{align}

Plugging these values into \eqref{eq:dirichlet_lcu_var_app}, we get \eqref{eq:ffvar_app}.
\qed
\end{proof}
By observation this implies that, for such systems,
\begin{align}
    \text{\textit{Var}}[m] \in O \left (\frac{1}{N \cdot k^3} \right ),
\end{align}
which is the statement of Corollary \ref{cor:bigoZ1}.

For the more general case with no additional assumptions on the observable or the fermionic Gaussian initial state, we get
\begin{corollary}
\label{cor:general_ffvar_app}
The expectation value of an observable $O$ for an LCU of Haar random fermionic Gaussian unitaries with a fermionic Gaussian initial state has variance 
\begin{align}
\label{eq:general_ff_dirichlet_lcu_var}
\var[m] &= \frac{24}{(k+1) \cdot (k+2) \cdot (k+3)}  \cdot \sum_{\kappa' = 0}^N \frac{1}{d} {N \choose \kappa'} {2N \choose 2\kappa'}^{-1}  (P_{2\kappa'}(O) + C_{2\kappa'}(O)) \nonumber \\ &+ \frac{4 \cdot (k-1)}{(k+1) \cdot (k+2) \cdot (k+3)} \cdot \left ( \frac{1}{2^{2N}} \tr(\rho_0^2)\tr(O^2) \,\,\, + \right.\nonumber\\ 
 &+ \left. \frac{2}{2^{2N}} \tr(\rho_0^2 P)tr(O^2P) + \frac{1}{2^{2N}} \tr(\rho_0 P\rho_0 P)\tr(OPOP) \right ) \nonumber \\ 
&-  \frac{4(5k-7)}{(k+1)^2(k+2)(k+3)} \cdot \left ( \frac{1}{2^N}(\tr(\rho_0)\tr(O) + \tr(\rho_0 P)\tr(O P)) \right )^2.
\end{align}
\end{corollary}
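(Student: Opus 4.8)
The plan is to prove this corollary purely by substitution into the specialised variance formula of \cref{thm:var_dirichhaar}, since for $\mathcal{G} = SO(2N)$ all three moments $E[m_i]$, $E[m_i^2]$ and $E[|m_{ij}|^2]$ have already been computed in the preceding results. In other words, essentially all the Haar-averaging work (via the Weingarten calculus and the results of \citet{diaz2023}) is already done, so the corollary is an assembly step. Concretely I would start from the ``equal Lie groups'' specialisation \eqref{eq:dirichlet_lcu_var_app},
\begin{align*}
\var[m] = \frac{24\, E[m_i^2]}{(k+1)(k+2)(k+3)} + \frac{4(k-1)\bigl(E[m_i]^2 + E[|m_{ij}|^2]\bigr)}{(k+1)(k+2)(k+3)} - \frac{4\, E[m_i]^2}{(k+1)^2},
\end{align*}
and substitute the three ingredients in turn.

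First I would insert the second moment $E[m_i^2] = \sum_{\kappa'=0}^{N} \frac{1}{d}\binom{N}{\kappa'}\binom{2N}{2\kappa'}^{-1}\bigl(P_{2\kappa'}(O)+C_{2\kappa'}(O)\bigr)$ from \cref{sec:rank1exps} (with $d = 2^N$), which reproduces the first summand of the claimed expression verbatim. Second I would insert $E[|m_{ij}|^2]$ from \cref{lem:ff_abssq}, producing the three-trace bracket multiplying $4(k-1)/[(k+1)(k+2)(k+3)]$. Crucially, unlike in \cref{cor:ffvar}, here I make no assumption that forces $E[m_i]=0$, so the first-moment contribution survives and must be tracked explicitly using $E[m_i] = \tfrac{1}{2^N}\bigl(\tr(\rho_0)\tr(O) + \tr(\rho_0 P)\tr(OP)\bigr)$ from \eqref{eq:first_moment}.

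The only step that is not immediate is collecting the two terms proportional to $E[m_i]^2$: the piece $\frac{4(k-1)}{(k+1)(k+2)(k+3)}$ from the middle summand and the piece $-\frac{4}{(k+1)^2}$ from the last summand. Placing these over the common denominator $(k+1)^2(k+2)(k+3)$ and simplifying the numerator $4\bigl[(k-1)(k+1) - (k+2)(k+3)\bigr] = 4(-5k-7)$ gives the single prefactor $-\frac{4(5k+7)}{(k+1)^2(k+2)(k+3)}$ multiplying $E[m_i]^2$. I note that this carries $5k+7$ rather than the $5k-7$ appearing in the statement, so I would double-check this as a probable sign typo, e.g.\ by evaluating both at $k=2$ (the combined coefficient is $\tfrac{1}{15}-\tfrac{4}{9}=-\tfrac{17}{45}$, which matches $5k+7$ but not $5k-7$).

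I do not anticipate a real obstacle, as the substantive content lives in \cref{lem:ff_abssq} and the reproduced moments of \citet{diaz2023}. The only points requiring care are clerical: keeping the normalisation conventions consistent ($d$ versus $2^N$), and crucially \emph{not} simplifying the parity-dependent traces $\tr(\rho_0^2 P)$, $\tr(\rho_0 P \rho_0 P)$, $\tr(O^2 P)$, $\tr(OPOP)$, since in the fully general case none of the convenient identities $P\rho_0 = \rho_0$ or $OP = PO$ that collapsed these terms in \cref{cor:ffvar} are available.
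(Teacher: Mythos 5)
Your proposal is correct and takes exactly the paper's approach: the paper's proof is likewise a one-line substitution of \eqref{eq:first_moment}, \eqref{eq:diaz_em2_gaussian} and \eqref{eq:ff_abssq_app} into \eqref{eq:dirichlet_lcu_var_app}. Your extra algebra on the combined $E[m_i]^2$ coefficient is also right --- $4\bigl[(k-1)(k+1)-(k+2)(k+3)\bigr] = -4(5k+7)$, confirmed by your $k=2$ check of $\tfrac{1}{15}-\tfrac{4}{9}=-\tfrac{17}{45}$ --- so the factor $5k-7$ in the stated \eqref{eq:general_ff_dirichlet_lcu_var} is indeed a typo for $5k+7$, which the paper's terse proof does not catch.
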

\begin{proof}
    The result follows from substituting \eqref{eq:first_moment}, \eqref{eq:diaz_em2_gaussian} from \cref{sec:rank1exps} and \eqref{eq:ff_abssq_app} from \cref{lem:ff_abssq} into \eqref{eq:dirichlet_lcu_var_app} from \cref{thm:var_dirichhaar}.
    \qed
\end{proof}

\begin{corollary*}{\ref{cor:bigoZ1}}
The variance of the expectation value of an observable for an LCU of Haar random fermionic Gaussian unitaries scales as
\begin{align}
    \var[m] \in \Omega \left (\frac{1}{N^s \cdot k^3} \right ).
\end{align}
where $m$ depends on the observable. For quadratic observables, $s=1$.
\end{corollary*}
\begin{proof}
\label{proof:bigoZ1}
    In \cref{cor:general_ffvar_app}, the dominant term is ${N \choose \kappa'} {2N \choose 2\kappa'}^{-1}$, which scales as $\Omega \left(\frac{1}{N^s}\right)$, where $s$ is the largest value of $\kappa'$ that satisfies $P_{2\kappa'}(O) + C_{2\kappa'}(O) \neq 0$. The case for quadratic observables follows directly from \cref{cor:ffvar}.
\qed
\end{proof}

\subsection{Linear combinations of expressive unitaries}
\label{sec:app_lcu_expressive}

\begin{lemma*}{\ref{lem:abssquare}}
    \label{lem:abssquare_app}
    For an observable $O$, and Haar distributed unitaries over a Hilbert space of dimension $d$,
    \begin{align}
        \label{eq:abssq_expressive_app}
        E[|m_{ij}|^2] = \frac{\tr(\rho_0^2)\tr(O^2)}{d^2}.
    \end{align}
\end{lemma*}
\begin{proof}
    \label{proof:abssquare}
    This proof may be derived by applying a result from the Weingarten calculus.
    \begin{align}
        E[|m_{ij}|^2] =& E[\tr(U_i \rho_0 U_j^\dagger O)\tr(U_j \rho_0 U_i^\dagger O)]\\
        =& E[\tr((U_i \otimes U_i^*) (\rho_0 \otimes \rho_0^T) (U_j^\dagger \otimes U_j^T) (O \otimes O))\\
        =& E[\tr((U_i \otimes U_i^*) (\rho_0 \otimes \rho_0^T) (U_j \otimes U_j^*)^\dagger (O \otimes O))
    \end{align}
    Utilising two applications e.g. \citet[Eq. (123)]{mele2024introduction}, this reduces to
    \begin{align}
        E[|m_{ij}|^2] =& \frac{1}{d^2} \tr(O^2) \tr(\rho_0^2)
    \end{align}
\qed
\end{proof}

\begin{corollary*}{\ref{cor:expressive_var}}
\label{cor:expressive_var_app}
The expectation value of an observable $O$ for an LCU of Haar random unitaries has variance 
\begin{align}
\label{eq:expressive_var_app}
   \var[m] &= \frac{24 \cdot (\tr(O^2) - tr(O)^2/2^N) \cdot (\tr(\rho_0^2) - 1/2^N)}{(4^N -1) \cdot (k+1) \cdot (k+2) \cdot (k+3)} \\ &+ \frac{4 \cdot (k-1) \cdot (\tr(O^2)/ 2^N\tr(\rho_0^2) + \tr(O)^2)}{(k+1) \cdot (k+2) \cdot (k+3) \cdot 2^N} \\ &- \frac{\tr(O)^2}{2^N}.
\end{align}
\end{corollary*}
\begin{proof}
\label{proof:expressive_var}
From e.g. \cite[Eq. (278)]{mele2024introduction},
\begin{align}
    \label{eq:expressive_var_e2m}
    E[m_i] = tr(O) / 2^N,
\end{align}
and from e.g. \citep[Eq. (14)]{Ragone2024} 
\begin{align}
    \label{eq:expressive_var_em2}
    E[m_i^2] =\frac{1}{4^N-1} \left ( \tr(O^2) - \tr(O)^2/2^N \right ) \left (\tr (\rho^2) - 1/2^N \right).
\end{align}
Substituting both \eqref{eq:expressive_var_e2m}, \eqref{eq:expressive_var_em2} above, plus \eqref{eq:abssq_expressive_app} from \cref{lem:abssquare}, into    \eqref{eq:dirichlet_lcu_var_app} from \cref{thm:var_dirichhaar}, yields the desired result.
\qed
\end{proof}

\end{document}